\documentclass[11pt]{article}

\usepackage[english]{babel}

\usepackage[a4paper, margin=1in]{geometry}
\usepackage{amsmath}
\usepackage{amsfonts}
\usepackage{graphicx}
\usepackage{color}
\usepackage[colorlinks=true, allcolors=blue]{hyperref}
\usepackage{standalone}

\usepackage[textsize=tiny, color=cyan]{todonotes}

\newcommand{\Hquad}{\hspace{0.5em}} 

\usepackage{amsthm}
\usepackage{thmtools, thm-restate} 
\newtheorem{theorem}{Theorem}
\newtheorem{corollary}[theorem]{Corollary}
\newtheorem{lemma}[theorem]{Lemma}

\usepackage{algorithm}
\usepackage[]{algpseudocode}

\usepackage{booktabs}

\usepackage[
backend=biber,
style=alphabetic,
maxbibnames=8
]{biblatex}
\addbibresource{references.bib}

\AtEveryBibitem{
    \clearfield{urlyear}
    \clearfield{urlmonth}
    \clearfield{url}
    \clearfield{note}
    \clearfield{month}
    \clearfield{day}
    \clearfield{editor}
    \clearfield{editorb}
    \clearfield{location}
    \clearfield{extra}
    \clearfield{isbn}
    \clearfield{issn}
    \clearlist{language}
    \clearname{editor}
    \clearname{editorb}
    \clearname{editorc}
}

\usepackage{framed}
  {%
   \def\FrameCommand##1{##1\hspace{10pt}\color{red}{\vrule width 3pt}}%
   \MakeFramed {\advance\hsize-\width \FrameRestore}%
  }%
  {\endMakeFramed}
  {%
   \def\FrameCommand##1{##1\hspace{10pt}\color{yellow}{\vrule width 3pt}}%
   \MakeFramed {\advance\hsize-\width \FrameRestore}%
  }%
  {\endMakeFramed}
  {%
   \def\FrameCommand##1{##1\hspace{10pt}\color{green}{\vrule width 3pt}}%
   \MakeFramed {\advance\hsize-\width \FrameRestore}%
  }%
  {\endMakeFramed}


\title{$3.415$-Approximation for Coflow Scheduling \\ via Iterated Rounding}
\author{Lars Rohwedder\thanks{University of Southern Denmark, Odense, Denmark. Supported by Dutch Research Council (NWO) project “The Twilight Zone of Efficiency: Optimality of Quasi-Polynomial Time Algorithms”
[grant number OCEN.W.21.268]} \and Leander Schnaars\thanks{Technical University of Munich, Munich, Germany. Supported by the Deutsche Forschungsgemeinschaft (DFG, German Research Foundation) - GRK 2201/2 - Projektnummer 277991500}}
\date{}

\begin{document}
\maketitle

\begin{abstract}
\noindent We provide an algorithm giving a $\frac{140}{41} (<3.415)$-approximation for Coflow Scheduling and a $4.36$-approximation for Coflow Scheduling with release dates. This improves upon the best known $4$- and respectively $5$-approximations and addresses an open question posed by \cite{agarwal_sincronia_2018, fukunaga22} and others. We additionally show that in an asymptotic setting, the algorithm achieves a $(2+\epsilon)$-approximation, which is essentially optimal under $\mathbb{P} \neq \mathbb{NP}$. The improvements are achieved using a novel edge allocation scheme using iterated LP rounding together with a framework which enables establishing strong bounds for combinations of several edge allocation algorithms.
\end{abstract}

\section{Introduction}
Coflow Scheduling models the problem of data exchange between various nodes in a shared network. It has been proven indispensable in improving the performance of common data exchange and distributed computing frameworks such as MapReduce \cite{Dean2008}, Spark \cite{Spark2010}, and Hadoop \cite{Hadoop2010}. These routines form an integral part for large scale computations commonly found in applications such as  bioinformatics, deep learning, and large language models \cite{Guo2018,Mostafaeipour2020,Gupta2017}. The problem has enjoyed attention both from the theory community as well as the application side, with many works spanning the bridge between theory and practice.\\

Formally, a coflow instance is given by some bipartite set of vertices $V := U_1 \cup U_2$ and a set of coflows $E_1,\dotsc,E_n$, where each coflow $E_j$ is a subset of bipartite edges on $V$, possibly containing duplicates. Additionally, each coflow $E_j$ has some associated weight $\omega_j \in \mathbb{R}^+$. This models for example a set of input and output ports in a shared network, where each edge inside a coflow represents some data transmission requirement. During each discrete step in time, we are allowed to schedule a set of edges on the graph for which no vertex has more than one adjacent edge, so a matching. This represents the requirement that ports only send to and receive from one other port during each discrete step in time. A coflow finishes at time-step $t$ if all of its edges have been scheduled on or before time $t$, with at least one edge being scheduled during $t$. We call $C^*_j$ the finishing time of coflow $E_j$ and wish to minimize the weighted sum of completion times $\sum_{j \in [n]}\omega_j C^*_j$.\\

Coflow Scheduling was first introduced by Chowdhury et al. \cite{chowdhury_efficient_2014}, though the closely related problem of scheduling on network switches has been studied earlier under different names by various authors \cite{bonuccelli_incremental_1991,gupta_scheduling_1996}. Coflow Scheduling can be seen as an extension with a combinatorial structure of a problem called Concurrent Open Shop Scheduling (COSS) with preemption. In COSS, there is a set of machines and jobs, where every job has some demand on each machine which can be fulfilled concurrently, and jobs finish when they are completed on every machine. For COSS, several $2$-approximation algorithms are known \cite{garg_order_2007,leung_scheduling_2007, mastrolilli_minimizing_2010} and the problem is known to be $\mathbb{NP}$-hard to approximate within $2-\epsilon$, for any $\epsilon > 0$ \cite{Sachdeva2013}. This hardness result extends to Coflow Scheduling. On the side of approximation algorithms for Coflow Scheduling, there is still a gap to the lower bound. For the case of no release dates, multiple authors have given $4$-approximation algorithms \cite{agarwal_sincronia_2018, ahmadi_scheduling_2017, fukunaga22}, which extend to $5$-approximations in the case of release dates. Fukunaga \cite{fukunaga22} shows that in the case of release dates the integrality gap of the linear program used in the algorithm is at most $4$, though his proof is non-constructive. Several authors have claimed $(2+\epsilon)$-approximations, but all were later shown to be incorrect, see \cite{im2018tightapp, ahmadi_scheduling_2017} for discussion.
For the setting in which the simultaneously schedulable flows have to be independent sets of a matroid instead of matchings, Im et al. \cite{im19} provided an algorithm with a $(2+\epsilon)$-approximation guarantee.
Khuller et al. \cite{khuller_select_2019} showed a framework which provides guarantees in an online setting using offline approximation algorithms, leading to a $12$-approximation for online Coflow Scheduling. Extensions to general graphs \cite{jahanjou_asymptotically_2017} and so called path-based Coflow Scheduling \cite{eckl_minimization_2020} have been studied.\\

Whether the $4$- and respectively $5$-approximation can be improved has been a major open question raised by most previous works \cite{agarwal_sincronia_2018, fukunaga22}, especially in light of the $2-\epsilon$ lower bound. We address this question and show that both bounds can be beaten, with a tight result in an asymptotic setting.

\subsection{Our Contribution}
\noindent We present the first polynomial time algorithm which achieves a better than $4$-approximation for Coflow Scheduling without release dates and the first algorithm which achieves a better than $5$-approximation with release dates.
More specifically, we show the following theorems.

\begin{restatable}{theorem}{rsthmcoflowapprox}\label{thm:coflowapprox}
There is a polynomial time algorithm achieving a $\frac{140}{41} (<3.415)$-approximation for Coflow Scheduling without release dates.
\end{restatable}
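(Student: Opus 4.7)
The plan is to strengthen the standard LP-based framework for coflow scheduling, in which one solves an LP relaxation yielding fractional completion times $\bar{C}_j$ for each coflow, sorts the coflows by $\bar{C}_j$ to fix an ordering, and then clears the coflows one at a time using an edge allocation subroutine. The two factors of $2$ in the classical $4$-approximation come from (i) the loss in moving from the LP to the ordering via a Smith-type argument, and (ii) the loss in the edge allocation subroutine, whose clearing time per coflow is bounded by essentially twice the maximum vertex load. Beating $4$ requires shaving from at least one of these factors while keeping the other under control.

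The core of my approach is a new edge allocation scheme based on iterated LP rounding. After fixing the permutation induced by the $\bar{C}_j$, the remaining task is to decide, time slot by time slot, which set of pending edges to schedule as a matching. I would set up an assignment-style LP on the pending edges, solve it, and iteratively fix variables at extreme-point vertices --- exploiting the half-integrality and face structure of the bipartite matching polytope --- re-solving the reduced LP each round. Each iteration either commits a clean integral decision or provably shrinks the problem, ensuring polynomial runtime. Because rounding is adaptive rather than one-shot, the scheme yields an allocation whose total length aligns more tightly with LP lower bounds than a static Birkhoff--von Neumann-style decomposition does.

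The final algorithm then realises the paper's suggested framework by running the iterated-rounding scheme alongside one or more complementary subroutines (for instance a classical matching-decomposition allocation) and outputting the best weighted completion time; the analysis would partition each coflow's contribution to the LP value according to which structural parameter dominates --- maximum input-port load, maximum output-port load, or cumulative LP cost up to that coflow --- showing that in each regime at least one subroutine produces a completion time within a prescribed factor of $\bar{C}_j$, and optimising the convex combination of these regime-wise bounds yields the precise constant $140/41$. The main obstacle is exactly this combined analysis: the clean factor of $4$ in prior work rests on a single uniform bound per coflow, whereas pushing below $4$ requires demonstrating that whenever one subroutine is close to its worst case another is strictly better, and then tuning parameters so that the two worst cases trade off at exactly $140/41$. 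Designing the right dichotomy of coflow instances together with matching LP lower bounds on each worst case is likely to consume most of the technical effort.
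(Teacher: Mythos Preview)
Your high-level framework is right --- obtain LP deadlines, run several edge-allocation subroutines, return the cheaper schedule, and bound the minimum via a convex combination --- and this is exactly what the paper does. But two of your concrete ingredients miss the mark.

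First, the iterated rounding: you describe a time-slot-by-time-slot scheme leveraging half-integrality of the bipartite matching polytope. That is not the mechanism, and it is unclear how it would beat greedy. The paper's scheme $\mathrm{CBF}^\tau$ is Beck--Fiala style: it rounds deadlines to multiples of a parameter $\tau$, groups time slots into \emph{blocks} of size $\tau$, and iteratively rounds an edge-to-block assignment LP by dropping degree constraints with few fractional variables (using a constraints-vs-variables count to guarantee progress). Each block's degree bound is violated by at most $2$, so the block swells from $\tau$ to $\tau+2$, yielding a completion-time bound of the form $\frac{\tau+2}{\tau}C_j + O(\tau)$. The parameter $\tau$ is what creates the tradeoff: large $\tau$ gives a multiplicative factor near $1$ but a large additive constant; greedy gives $2C_j-1$, which is superior when $C_j$ is small. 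Without the block structure and the $\tau$ knob, there is no tunable tradeoff to optimise.

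Second, the analysis dichotomy: you propose partitioning by which port-load or cumulative-cost parameter dominates. The paper does something much simpler and cleaner. It proves a one-line framework: if $\sum_i \lambda_i f_i(x) \le \alpha(x+1)$ for all $x\ge 1$, then the minimum-cost schedule is a $2\alpha$-approximation. The only variable is the deadline $x=C_j$ itself; no per-coflow structural case split is needed. The ``$+1$'' on the right is essential and comes from the sharper deadline guarantee $\sum_j \omega_j C_j \le 2\,\mathrm{OPT} - \sum_j \omega_j$ (not merely $\le 2\,\mathrm{OPT}$). Plugging in $f_G(x)=2x-1$ and $f_{\mathrm{CBF}^6}(x)=\tfrac{4}{3}x+\tfrac{31}{6}$ with $\lambda_1=\tfrac{23}{41},\lambda_2=\tfrac{18}{41}$ gives exactly $\alpha=\tfrac{70}{41}$ and hence $\tfrac{140}{41}$. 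Your port-load dichotomy would be both harder and unnecessary.
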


\begin{restatable}{theorem}{rsthmcoflowapproxrelease}\label{thm:coflowapproxrelease}
There is a polynomial time algorithm achieving a $4.36$-approximation for Coflow Scheduling with release dates.
\end{restatable}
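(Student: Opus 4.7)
The goal is to lift the algorithm behind Theorem~\ref{thm:coflowapprox} to the release-date setting while keeping the additional loss strictly below the naive $+1$ that would give only $\tfrac{140}{41}+1 \approx 4.42$. Any proof of the $4.36$ bound must therefore do more than just schedule each coflow after its release date and then apply the no-release-dates algorithm as a black box.

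The plan is to start from the standard completion-time LP augmented by release-date constraints $C_j \ge r_j + \rho_j$, where $\rho_j$ is a vertex-load based lower bound on the processing requirement of coflow $j$, together with the Queyranne-type vertex-load inequalities underlying the existing $4$- and $5$-approximations. The LP values $\bar{C}_j$ are per-coflow lower bounds that already respect release dates, and $\sum_j \omega_j \bar{C}_j$ lower-bounds the optimum. The algorithm then runs the iterated LP rounding edge-allocation scheme of Theorem~\ref{thm:coflowapprox} inside a release-date-respecting list scheduler driven by a priority order extracted from the LP. To beat the naive $+1$ penalty, I would invoke the combination framework alluded to in the introduction: rather than commit to a single rounding, the algorithm maintains a small family of candidate edge allocations, and the analysis bounds each coflow's completion time by the minimum bound over the family. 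For any fixed coflow $j$, one candidate pays the full iterated-rounding factor $\alpha = \tfrac{140}{41}$ on $\bar{C}_j$ but no explicit release-date term, while another pays $r_j$ explicitly but only a smaller factor $\beta < \alpha$ on the portion of the vertex load released after $r_j$. Combined with the LP inequality $\bar{C}_j \ge r_j$, taking the minimum of the two bounds yields a per-coflow ratio strictly below $\alpha+1$, and careful tuning of the internal parameters makes this ratio equal to $4.36$.

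The main obstacle will be producing a single integral schedule in which both bounds are simultaneously valid for every coflow, without running two disjoint allocation algorithms and paying for them separately. This requires the combination framework to be genuinely constructive, in the sense that the same edge-by-edge allocation admits both analyses a posteriori, and that the priority order used for dispatching coflows is compatible with both guarantees. Tuning the parameters so that the worst case over coflows lands exactly at $4.36$ then reduces to a convex-combination optimization analogous to the one used in the proof of Theorem~\ref{thm:coflowapprox}, but complicated by the need to carry release dates through every step of the load accounting.
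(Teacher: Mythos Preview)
Your high-level intuition --- combine one allocation routine whose bound carries an explicit $r_j$ term with one whose bound does not, and then optimize a convex combination --- is exactly what the paper does. But two concrete points in your plan are off and would derail an actual proof.

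First, $\tfrac{140}{41}$ is not the guarantee of any single edge-allocation routine; it is the \emph{final} ratio in the no-release-date case, already incorporating the factor-$2$ loss from the deadline construction. The two building blocks the paper actually combines are release-date versions of $\mathrm{Greedy}$ and $\mathrm{CBF}^\tau$: the former satisfies $\mathrm{Greedy}_R(C_j) \le r_j + 2C_j - 1$, and the latter (with $\tau = 4$) satisfies a weighted bound of the form $\tfrac{3}{2}C_j + 10$ with \emph{no} $r_j$ term at all. The absence of $r_j$ in the $\mathrm{CBF}_R$ bound is not delicate: one rounds release dates up to the next multiple of $\tau$ and deadlines up by a further $\tau$, which keeps the block LP feasible, and then the iterated rounding runs unchanged.

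Second --- and this is the real gap --- the combination framework does \emph{not} require producing a single schedule in which both per-coflow bounds hold simultaneously. You run the two algorithms independently and return whichever schedule has smaller total cost; the analysis upper-bounds $\min\{C_{\mathrm{ALG}_1}, C_{\mathrm{ALG}_2}\}$ by any convex combination $\lambda_1 C_{\mathrm{ALG}_1} + \lambda_2 C_{\mathrm{ALG}_2}$. The obstacle you flag as the main difficulty simply does not arise. What \emph{is} needed is a small extension of Lemma~\ref{lemma:appbound} to absorb the $r_j$ term: one shows $\sum_i \lambda_i f_i(x,r_x) \le a(x+1) + b(r_x+1)$, applies $\sum_j \omega_j C_j \le 2\,\mathrm{OPT} - \sum_j \omega_j$ to the first piece and $r_j + 1 \le \mathrm{OPT}_j$ to the second, and concludes a $(2a+b)$-approximation. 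With $\lambda_1 = 0.68$, $\lambda_2 = 0.32$ this gives $a = 1.84$, $b = 0.68$, hence $2a+b = 4.36$.
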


Using a more technical construction, the guarantee of Theorem \ref{thm:coflowapprox} can be slightly improved, see Section \ref{sec:app:approximprov} for details. We additionally prove that in a certain asymptotic setting, roughly when most coflows have large finishing times in any optimum solution, we can achieve a $(2+\epsilon)$-approximation, which is optimal. This result holds even in the case with release dates.

\begin{restatable}{theorem}{rsthmcoflowapproxlarge}\label{thm:coflowapproxlarge}
 For any $\epsilon > 0$, there exists an $\hat{\epsilon} > 0$ such that there is a $(2+\epsilon)$-approximation algorithm for all coflow instances $\mathcal{I}$ fulfilling
 \begin{equation*}
     \sum_{j \in [|\mathcal{I}|]}\omega_j \quad \le \quad \hat{\epsilon}\cdot \mathrm{OPT}(\mathcal{I}).
 \end{equation*}
\end{restatable}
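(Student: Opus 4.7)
The plan is to combine an LP-based lower bound with a refined scheduling analysis whose loss vanishes under the asymptotic hypothesis. The starting point is the Queyranne-type LP relaxation (extended to release dates) that underlies Theorems~\ref{thm:coflowapprox} and \ref{thm:coflowapproxrelease}. Its optimum gives $\sum_j \omega_j C_j^{LP} \le \mathrm{OPT}$, and sorting by $C_j^{LP}$ yields an ordering $\sigma$ with cumulative max-load values $M_j := \max_v \sum_{j' \le_\sigma j} L_{j',v}$ satisfying $M_j \le 2\, C_j^{LP}$ for each $j$, by the standard Queyranne argument applied to the prefix up to $j$ under $\sigma$.

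The second and central step is to show that, for any $\delta > 0$, one can compute in polynomial time a schedule with completion times
\begin{equation*}
 C_j^* \;\le\; (1+O(\delta))\, M_j + R(\delta),
\end{equation*}
where $R(\delta)$ is an instance-independent constant. The construction partitions coflows by $M_j$ into geometric-scale blocks of ratio $1+\delta$, applies K\"onig's bipartite edge-coloring theorem within each block, and pipelines adjacent blocks by overlapping the tail of $B_k$ with the head of $B_{k+1}$ to reuse otherwise idle matching capacity. Sequential K\"onig-scheduling of geometric blocks alone already gives only a factor $4$ (the minimum of $\alpha^2/(\alpha-1)$ at $\alpha = 2$), so the pipelining is essential to reach $2+O(\delta)$.

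Summing with weights under the asymptotic hypothesis yields
\begin{equation*}
 \sum_j \omega_j C_j^* \;\le\; (1+O(\delta)) \sum_j \omega_j M_j + R(\delta) \sum_j \omega_j \;\le\; 2(1+O(\delta))\, \mathrm{OPT} + R(\delta)\, \hat\epsilon\, \mathrm{OPT},
\end{equation*}
so choosing $\delta$ with $O(\delta) \le \epsilon/4$ and then $\hat\epsilon \le \epsilon/(4\, R(\delta))$ gives the $(2+\epsilon)$-approximation. The main obstacle is the pipelining step: adapting the iterated-rounding ideas of Theorem~\ref{thm:coflowapprox} to fractionally share matching bandwidth across block boundaries while respecting both the $\sigma$-ordering and the release-date constraints. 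Making this pipelining quantitatively tight enough to preserve the $(1+O(\delta))\, M_j$ dependence, rather than the weaker $2\, M_j$ dependence of sequential scheduling, is where the bulk of the technical work lies.
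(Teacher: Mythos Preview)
Your outer structure is correct and matches the paper: obtain deadlines with $\sum_j \omega_j C_j \le 2\,\mathrm{OPT}$, produce a schedule whose cost is a $(1+O(\delta))$ multiplicative factor of $\sum_j \omega_j C_j$ plus an additive term of the form $R(\delta)\sum_j \omega_j$, and absorb the additive term via the hypothesis $\sum_j \omega_j \le \hat\epsilon\,\mathrm{OPT}$. That arithmetic is exactly what the paper does.

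The gap is in the middle step. You propose to get the $(1+O(\delta))M_j + R(\delta)$ bound by geometric blocking plus a new pipelining mechanism that shares matching capacity across block boundaries, and you yourself flag that ``making this pipelining quantitatively tight enough\dots is where the bulk of the technical work lies.'' That is not a proof; it is a research program. Nothing in your write-up establishes that such a pipelining scheme exists, respects release dates, and achieves the stated per-coflow bound rather than the weaker $2M_j$ bound you correctly identify as the default. So the proposal, as written, has no proof of the one claim on which everything rests.

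The paper avoids this entirely. It does not invent a new pipelining scheme; it simply invokes the already-proved $\mathrm{CBF}^\tau_R$ guarantee (Lemma~\ref{lemma:cbfrelapprox}), which for any $\tau$ yields
\[
\mathrm{CBF}^\tau_R(C_j) \;\le\; \tfrac{\tau+2}{\tau}\,C_j + 2\tau + 2,
\]
i.e.\ exactly a $(1+2/\tau)$ multiplicative factor plus an additive constant depending only on $\tau$. Summing with weights and using $\sum_j \omega_j C_j \le 2\,\mathrm{OPT}$ and $\sum_j \omega_j \le \hat\epsilon\,\mathrm{OPT}$ gives $(2 + 4/\tau + (2\tau+2)\hat\epsilon)\,\mathrm{OPT}$, and one chooses $\tau$ then $\hat\epsilon$. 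So in the paper this theorem is a three-line corollary of the iterated-rounding machinery already built for Theorem~\ref{thm:cbf}; no new algorithmic idea is needed. Your proposal would be a genuinely different route if the pipelining could be carried out, but as stated it leaves precisely the hard part unproved, whereas the paper's route has nothing left to prove.
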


\noindent Note that using a framework by Khuller et al. \cite{khuller_select_2019}, any improvement in the approximation ratio for Coflow Scheduling without release dates directly gives an improvement for the best known approximation for the online setting of Coflow Scheduling.
The following table provides an overview over the state of the art of approximation algorithms for various variants of Coflow Scheduling and our respective improvements.

\begin{table}[H]
\centering
\renewcommand{\arraystretch}{1.2}
\begin{tabular}{l@{\hskip 6\tabcolsep}l@{\hskip 6\tabcolsep}l}\toprule
      Case     & Best Known  & This work\\\midrule
No Release Dates    & 4\phantom{00} [*] & 3.415\\
Release Dates & 5\phantom{00} [*] & 4.36\\
Release Dates (integrality gap) & 4\phantom{00} \cite{fukunaga22} & $3.893$\\
Asymptotic + No Release Dates & 4\phantom{00} [*] & $2+\epsilon$\\
Asymptotic + Release Dates   & 5\phantom{00} [*] &  $2+\epsilon$\\
Online & 12\phantom{0} \cite{khuller_select_2019} & $11.415$\\\bottomrule
\end{tabular}
\caption{Best known previous approximations and our results. The sources marked [*] are \cite{ahmadi_scheduling_2017,agarwal_sincronia_2018,fukunaga22}.}
\label{tab:algperf}
\end{table}

The main technical contributions are a novel allocation and rounding scheme for the individual edges of each coflow, inspired by techniques used in proving the Beck-Fiala Theorem from discrepancy theory and a framework for establishing the approximation ratio for combinations of several such edge scheduling algorithms.\\
Our algorithm follows a two phase approach. This has been done either implicitly or explicitly in most approaches found in the literature. In the first phase, for each coflow and its associated flows, deadlines are determined through an LP based approach combined with a randomized rounding procedure. These deadlines are equipped with a special structure which we exploit in the algorithm. They specifically provide a $2$-approximation cost guarantee with respect to some optimum solution.
In the second phase, the goal is to find a valid allocation for the coflows, or more precisely for the individual edges of each coflow, to time-slots. In previous works, this was achieved by a simple greedy allocation procedure, which in the case without release dates achieves a deadline violation of at most a factor $2$ for each coflow, yielding a $4$-approximation. We use a combination of two algorithms, both the greedy allocation rule and a novel iterated rounding scheme. The greedy allocation performs well for small deadlines, but converges to a $2$-approximation for larger deadlines, while the rounding procedure can schedule large deadlines arbitrarily well, but has worse results for small ones. Note that these factors only capture the completion time delay and do not take into account the additional loss of factor $2$ from the deadline construction. By running both algorithms and picking the better result, we are able to improve upon the factor $4$.\\
To show the improved approximation ratio, we establish a general framework which can be used to bound the coflow scheduling approximation ratio for any collection of edge allocation algorithms.

\subsection{Organization}
The next chapter introduces important notation and discusses results from LP and graph theory which form an important part of several subroutines. Section \ref{sec:algframe} provides a complete overview of the entire algorithm and the most important theorems for Coflow Scheduling without release dates. Full proofs and additional details can be found in the following Section \ref{sec:schedflows}. Section \ref{sec:asymp} proves the $2+\epsilon$ guarantee in the asymptotic case. Appendix \ref{sec:app} provides additional details and discusses some further related theory, such as results regarding the structure and complexity of the used LP, extending the algorithm to release dates and high edge multiplicities, the improved LP integrality gap, and shows how to achieve a slight improvement over the approximation guarantee from Theorem \ref{thm:cbf}.

\section{Preliminaries}

For $n \in \mathbb{N}$ we define $[n] := \{1,\dotsc,n\}$. Graphs $G = (V,E)$ are defined by a set of vertices $V$ and edges $E \subseteq V \times V$. We slightly abuse notation and allow $E$ to contain multiple copies of the same edge. We use $\Delta(G)$ to denote the maximum degree of a graph. For some set of edges $E$, $\Delta(E)$ refers to the maximum degree of the canonical graph induced by this edge set. \\

A coflow instance is given by a collection of bipartite edge multi-sets $E_1,\dotsc,E_n$ on some set of vertices, together with weights $\omega_1,\dotsc,\omega_n \in \mathbb{R}_+$. We usually only refer to the edge sets and let the vertices of the underlying graph be implicitly described by them. We define $E := \cup_{j \in [n]}E_j$. In the case of release dates, for every coflow $j \in [n]$ there is a release date $r_j \in \mathbb{N}$. A release date $r_j$ means that edges from coflow $E_j$ can be scheduled the earliest in time slot $r_j+1$. We use the term flow to refer to a collection of identical edges within one coflow, so essentially an edge with its multiplicity. In the main body of this work we assume that edge multiplicities are encoded in this explicit way, meaning that multiplicities are represented by multiple copies. We discuss the case where the multiplicities are instead encoded as an integer in Section \ref{sec:app:coflowpseudopoly}. A valid schedule is a mapping of all edges to time slots, such that in every time slot the assigned edges form a matching. In the case of release dates, no edge from any coflow is allowed to be scheduled in a time slot smaller than or equal to the respective release date. The finishing time of a coflow is the latest time slot to which one of its edges is assigned. We call $C^*_j$ the finishing time of coflow $E_j$ and wish to minimize the weighted sum of completion times $\sum_{j \in [n]}\omega_jC^*_{j}$.\\

As we use an iterated LP rounding scheme, we give a brief overview of the most important relevant theory here. For more details see for example \cite{bansal:LIPIcs.FSTTCS.2014.1,schrijver_linprog}. Let $A \in \mathbb{R}^{m \times n}$ be a matrix and $b \in \mathbb{R}^m, c \in \mathbb{R}^n$ be vectors. We consider the polytope $\mathcal{P} = \{x\in [0,1]\ |\ Ax \le b\}$ and the associated LP $\min_{x \in \mathcal{P}} c^Tx$. From standard LP theory we know that there always exists an LP optimum solution at a vertex of $\mathcal{P}$ which can be found in strongly polynomial time, as long as all values in $A$ are polynomially bounded \cite{Tardos1986}. One key fact we use is that additionally if $m < n$, one can find such a vertex solution in which at least $n - m$ entries are in $\{0,1\}$. This theorem can be extended to work when every entry $x_i$ is constrained to some interval $[0,n_i]$, for $n_i \in \mathbb{N}$.\\

The Coflow Scheduling problem is closely related to several well studied graph and coloring problems. As we use some of these results directly and implicitly in our work, we briefly review them here. Given some set of edges $E$ on a bipartite graph $G=(V,E)$, the question whether they can be partitioned into some number of matchings $k$ is equivalent to asking whether there is a proper $k$-edge-coloring of $E$. Clearly at least $\Delta(G)$, i.e. the maximum degree of the graph, colors are needed. Kőnig's Theorem, and to an extent also Vizing's Theorem, give a strong result for bipartite graphs:

\begin{theorem}[\cite{König1916}]\label{thm:koenig}
Any bipartite graph $G$ can be properly edge-colored with $\Delta(G)$ colors.
\end{theorem}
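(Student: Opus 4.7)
I would prove this by induction on the number of edges, using a Kempe-chain / alternating-path argument in which bipartiteness plays the essential role. The base case of an edgeless graph is trivial. For the inductive step, pick an arbitrary edge $uv \in E(G)$. Observe that $G - uv$ has maximum degree at most $\Delta(G)$, so by the induction hypothesis it admits a proper edge coloring $\chi$ with the colors $\{1,\dotsc,\Delta(G)\}$. The goal is to extend $\chi$ by assigning a color to $uv$.

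Since $u$ has degree at most $\Delta(G)$ in $G$, at most $\Delta(G) - 1$ distinct colors appear on the edges incident to $u$ in $G - uv$; hence some color $\alpha$ is missing at $u$. Symmetrically some color $\beta$ is missing at $v$. If $\alpha = \beta$, then coloring $uv$ with $\alpha$ completes the step. Otherwise, consider the subgraph $H \subseteq G - uv$ consisting of all edges colored $\alpha$ or $\beta$. Every vertex has degree at most two in $H$, so its connected components are simple paths and simple cycles whose edge colors alternate between $\alpha$ and $\beta$. Because $v$ has no $\beta$-edge, $v$ is an endpoint of a path component $P_v$ of $H$ (possibly a trivial one-vertex path).

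The crux of the argument, and where I expect to spend the most care, is showing that $P_v$ does not reach $u$; this is the only place the bipartiteness of $G$ is used. Suppose for contradiction that $P_v$ ends at $u$. Starting from $v$, the edges of $P_v$ are colored $\alpha, \beta, \alpha, \beta, \dotsc$ in order. The vertex $u$ has no $\alpha$-edge, so the last edge of $P_v$ (the one entering $u$) must be $\beta$-colored, which forces $P_v$ to have even length. But in a bipartite graph any walk between the endpoints of an edge has odd length, since $u$ and $v$ lie in opposite sides of the bipartition; this contradicts the even length of $P_v$. Hence $u \notin V(P_v)$, and we may perform a Kempe swap, exchanging the colors $\alpha$ and $\beta$ along $P_v$. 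This operation keeps the coloring proper, does not touch any edge at $u$, and replaces the first edge at $v$ (if any) by a $\beta$-edge, so that $\alpha$ is now missing at both $u$ and $v$. We then color $uv$ with $\alpha$, completing the induction.
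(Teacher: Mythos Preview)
Your argument is a correct and standard proof of K\H{o}nig's edge-coloring theorem via alternating paths (Kempe chains); the parity argument using bipartiteness is exactly the right way to rule out $u \in V(P_v)$, and one should perhaps note explicitly that $u$, having no $\alpha$-edge, could only appear in $P_v$ as an endpoint, so excluding that case indeed gives $u \notin V(P_v)$.

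There is, however, nothing to compare against: the paper does not give its own proof of this theorem. It is quoted as a classical result with a citation to K\H{o}nig's original 1916 paper, and the surrounding text only remarks that the proof can be made constructive so as to yield a polynomial-time algorithm (and that this extends to the high-multiplicity setting). Your inductive Kempe-chain proof is precisely such a constructive argument, so it is fully in line with what the paper asserts about the result.
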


As a set of matchings is equivalent to a set of scheduled flows in the coflow setting, the question whether a set of edges can be scheduled during some collection of time points is equivalently answered by this. Any set of flows for which the induced graph has maximum degree $d$ can be scheduled within $d$ time slots. This result is essential to our analysis, as it shows that degree bounds for some set of selected flows are sufficient to ensure their schedulability.\\
Note that the proof of Theorem \ref{thm:koenig} can be done in a constructive way, leading to a polynomial time algorithm producing such an allocation. This algorithm can be extended to work even for edges with possibly superpolynomial multiplicities, for details see \cite{zhen_2015}.

\section{Algorithmic Framework}\label{sec:algframe}
This section provides a complete overview of the algorithmic framework used to establish the improved approximation ratios. We focus on the case without release dates here, the necessary modifications for release dates are described in Section \ref{sec:app:frameworkrelease}.

\subsection{Coflow Deadlines}\label{sec:coflowdead}
Given some coflow instance, we aim to determine a deadline for each of its coflows. These deadlines might not necessarily be strict in the sense that constructed schedules have to adhere to them, but they are rather used to both guide edge allocation procedures and to then bound their resulting costs. Most existing coflow approximation algorithms use a similar strategy.\\
We take a structural approach, where we first define structure which we want our deadlines to obey and then describe how such deadlines can be found. We capture the structural constraints in the following LP. It has been implicitly used in analysis by \cite{im19,fukunaga22} and others. Let $C_1 \le C_2 \le \dotsc \le C_n$ be deadlines for the coflows and for easier notation define $C_0 := 0$.

\begin{gather*}\tag{LP I}\label{lp:main}
\begin{aligned}
\sum_{s \in [n]}x_{s,e} &&=& \quad 1 &&\forall e \in E & (I)\\
\sum_{e: v \in e} x_{s,e} &&\le& \quad C_s - C_{s-1}\quad &&\forall s \in [n],\forall v \in V & (II)\\
x_{s,e} &&=& \quad 0 &&\forall j \in [n], \forall e \in E_j, \forall s > j\quad & (III)\\
x_{s,e} &&\ge&\quad 0
\end{aligned}
\end{gather*}

Instead of enforcing the necessary constraints for Coflow Scheduling for each individual time slot, \ref{lp:main} groups the slots into blocks in between the coflow deadlines. The variable $x_{s,e}$ describes the assignment of edge $e$ to block $s$. Constraint $(I)$ ensures that every edge from every coflow is fully scheduled. The block between $C_{s-1}$ and $C_{s}$ has size $C_{s} - C_{s-1}$, so constraint $(II)$ ensures that in every such block for every vertex the amount of adjacent edges does not exceed the block size. Constraint $(III)$ forces edges to be zero for blocks after the respective deadline. The step from individual time slot degree bounds to block degree bounds is justified by Kőnig's Theorem (Theorem \ref{thm:koenig}), as it guarantees that any bipartite graph with some maximum degree $\Delta$ can be decomposed into $\Delta$ matchings. This is equivalent to saying that any set of bipartite edges $E$ for which the induced graph has maximum degree $\Delta$ can be scheduled in $\Delta$ time slots.\\

Assuming the deadlines $C_1 \le C_2 \le \dotsc \le C_n$ are set as the coflow finishing times from some optimal schedule for the underlying coflow instance, the edge assignment directly induces a feasible point inside \ref{lp:main}. For each edge $e$ which is scheduled in some time slot $t$ in the optimal schedule, set $x_{s,e} := 1$, for $s$ such that $t \in (C_{s-1},C_{s}]$. Set all other variables to $0$. As by definition of a valid solution, each edge is scheduled before its coflow's deadline, constraint $(I)$ is fulfilled and constraint $(III)$ cannot be violated. As the edges in each time slot form a matching, constraint $(II)$ is also fulfilled.\\

Conversely, an integral solution to the LP corresponds to a valid solution for Coflow Scheduling. However, in order to be able to solve the LP in polynomial time, we cannot enforce integrality. Hence the constraints only enforce that the variable assignment corresponds to a fractional matching. There are instances and deadlines for which \ref{lp:main} is feasible, but no feasible integral point and therefore also no feasible integral schedule exists. In fact, determining whether an integral points exists is an $\mathbb{NP}$-hard problem, for details see Section \ref{sec:app:lpintegrality} in the appendix.\\

Finding some set of deadlines for which \ref{lp:main} is feasible is easy, as one can simply choose large enough values to guarantee feasibility. However, for the purpose of constructing good approximation algorithms for Coflow Scheduling, we require that the deadlines fulfill some cost guarantees with respect to an optimal coflow schedule.\\
There is an LP based approach which returns deadlines for which \ref{lp:main} is feasible and certain strong guarantees hold. This technique has been used by \cite{im19,fukunaga22} and others. They use a randomized rounding scheme on another LP formulation to determine integral deadlines $C'_1,\dotsc, C'_n$. We slightly modify their algorithm and leave out the final step in which they round up the deadlines and obtain $C_1,\dotsc,C_n$. These deadlines are thus potentially fractional. By slightly modifying their proof, the following bound can be shown.

\begin{restatable}[\cite{im19}]{lemma}{rslemmatwoapproxstrong}\label{lemma:2approxstrong}
There is a polynomial time randomized algorithm determining deadlines $C_1,\dotsc,C_n$ for which \ref{lp:main} is feasible and for which the following cost bound holds: 
\begin{equation*}
\sum_{j \in [n]} \omega_j \mathbb{E}[C_j] \quad  \le \quad  2\cdot \mathrm{OPT} \ -\  \sum_{j \in [n]}\omega_j
\end{equation*}
\end{restatable}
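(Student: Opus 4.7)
The plan is to follow the approach of Im et al.~\cite{im19}, with a small refinement that avoids the final rounding-up of deadlines to integers and thereby yields the sharper additive $-\sum_j \omega_j$ slack. The argument proceeds in three stages: set up an auxiliary LP, apply $\alpha$-point randomized rounding, and verify feasibility of \ref{lp:main}.

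First I would introduce an auxiliary time-indexed LP relaxation, distinct from \ref{lp:main}, with variables $x_{j,t,e}$ representing the fractional amount of edge $e \in E_j$ scheduled in time slot $t$, the coverage constraint $\sum_t x_{j,t,e} = 1$ for every edge, and the vertex-degree constraint $\sum_{j,\, e \in E_j,\, v \in e} x_{j,t,e} \leq 1$ for every vertex $v$ and every time slot $t$. This degree relaxation is valid for any integer schedule by K\H{o}nig's Theorem (Theorem~\ref{thm:koenig}). For each coflow $j$ I would then define a suitable fractional mean completion time $\bar C_j$ from the support of $x_{j,\cdot,\cdot}$, e.g., a vertex-maximum of a demand-weighted mean over time slots. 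A standard inequality obtained by plugging in any valid integer schedule yields the LP lower bound $\sum_j \omega_j \bar C_j \leq \mathrm{OPT} - \tfrac{1}{2}\sum_j \omega_j$: the $\tfrac{1}{2}$-gap per coflow reflects the half-time offset between the midpoint of an integer time slot and its upper endpoint.

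Next I would perform the classical $\alpha$-point randomized rounding: sample $\alpha \in (0,1]$ uniformly and, for each $j$, set $C_j := C_j(\alpha)$ to be the smallest time at which cumulative processing of coflow $j$ in the auxiliary LP solution reaches fraction $\alpha$. A Schulz--Skutella-type analysis gives $\mathbb{E}_\alpha[C_j(\alpha)] \leq 2\bar C_j$, and combining with the lower bound from the previous stage yields the desired $\sum_j \omega_j \mathbb{E}[C_j] \leq 2\mathrm{OPT} - \sum_j \omega_j$. The ``slight modification'' of Im et al.\ alluded to in the lemma consists precisely in keeping these deadlines fractional rather than rounding them up to integers, which would otherwise cost an additive $\sum_j \omega_j$ and leave only the weaker bound $\leq 2\mathrm{OPT}$.

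Finally, to verify feasibility of \ref{lp:main} with deadlines $C_1 \leq \cdots \leq C_n$ (after reindexing coflows by $C_j$), I would construct an explicit feasible point of \ref{lp:main} from the auxiliary LP: for each edge $e \in E_j$ and block $s \in [n]$, set $x_{s,e}$ equal to the total mass of $e$ from $x_{j,\cdot,e}$ falling in the interval $(C_{s-1}, C_s]$. Constraint $(I)$ holds because each edge's mass sums to $1$; constraint $(II)$ follows by integrating the auxiliary degree constraint over the block; constraint $(III)$ requires that $C_j$ dominate the last time slot with positive mass for coflow $j$. The main obstacle is exactly this last point, since for small $\alpha$ the $\alpha$-point $C_j(\alpha)$ may cut off a substantial fraction of coflow $j$'s processing. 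The standard resolution is to truncate the auxiliary LP solution at each $C_j(\alpha)$ and scale up the retained mass by a factor of (at most) $2$, an operation whose effect on the expected cost is absorbed into the factor $2$ of the $\alpha$-point bound; a careful accounting adapted from Im et al.\ is the technically delicate part of the argument, and is where the modification to retain fractional deadlines requires revisiting their proof.
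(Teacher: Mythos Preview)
Your proposal contains a genuine gap in the rounding step, and it is precisely the step that makes the whole argument work.

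You sample $\alpha$ \emph{uniformly} and set $C_j := C_j(\alpha)$, the raw $\alpha$-point. With this choice, feasibility of \ref{lp:main} fails: by definition only an $\alpha$-fraction of each edge's mass lies in $(0,C_j(\alpha)]$, so to satisfy constraint $(I)$ you must, as you say, scale up the retained mass. But the required factor is $1/\alpha$, not ``at most $2$''; for small $\alpha$ this blows up and the scaled assignment violates constraint $(II)$, whose right-hand sides $C_s - C_{s-1}$ do not scale. No ``careful accounting'' fixes this, because the obstruction is structural, not analytic. Relatedly, your claimed Schulz--Skutella bound $\mathbb{E}_\alpha[C_j(\alpha)] \le 2\bar C_j$ is off: with uniform $\alpha$ one has $\mathbb{E}_\alpha[C_j(\alpha)] = \int_0^1 C_j(\alpha)\,d\alpha \approx \bar C_j$, which looks too good exactly because the resulting deadlines are infeasible.

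The paper's proof (following \cite{im19}) avoids both issues simultaneously by a different choice: sample $\theta$ with density $f(\theta)=2\theta$ and set $C_j := C_j(\theta)/\theta$. The division by $\theta$ is the key idea you are missing. It corresponds to \emph{stretching} the continuous LP schedule by a factor $1/\theta$, which preserves the degree constraints pointwise (the load per unit time is unchanged) while guaranteeing that every edge of coflow $j$ is fully processed by time $C_j(\theta)/\theta$; hence \ref{lp:main} is feasible with no rescaling of mass. The density $2\theta$ then makes the cost computation clean:
\[
\mathbb{E}\Big[\frac{C_j(\theta)}{\theta}\Big] \;=\; \int_0^1 \frac{C_j(\theta)}{\theta}\cdot 2\theta\,d\theta \;=\; 2\int_0^1 C_j(\theta)\,d\theta,
\]
and Im et al.'s identity $\sum_j \omega_j \int_0^1 C_j(\theta)\,d\theta = \mathrm{Cost}(\text{LP}) - \tfrac{1}{2}\sum_j \omega_j$ gives the bound. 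The ``slight modification'' is only the omission of the final ceiling, exactly as you guessed; but the underlying rounding is $C_j(\theta)/\theta$ with biased $\theta$, not a uniform $\alpha$-point.
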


More details about the procedure used by \cite{im19} to determine such deadlines can be found in the appendix in Section \ref{sec:app:coflowdeadline}. They only implicitly work with \ref{lp:main}, so we provide additional details on the connection. Note that the procedure can be de-randomized to obtain a fully deterministic algorithm.\\

The multiplicative factor of $2$ in Lemma \ref{lemma:2approxstrong} is optimal assuming $\mathbb{P} \neq \mathbb{NP}$. This follows from the factor $(2-\epsilon)$-approximation hardness of Concurrent Open Shop Scheduling, as Coflow Scheduling can be seen as a generalization of this problem \cite{Sachdeva2013}.

\subsection{Integral Edge Assignments with Guarantees}
Let $C_1,\dotsc,C_n$ be deadlines for which \ref{lp:main} is feasible and Lemma \ref{lemma:2approxstrong} holds. Using the result of the lemma, we immediately obtain that if we are able to find an allocation such that all edges from each coflow are scheduled by their respective deadlines, we have achieved a $2$-approximation for Coflow Scheduling. In the same way, if for some $\alpha \ge 1$ we are able to schedule each coflow $j$ by time $\alpha \cdot C_j$, we obtain a $2 \cdot \alpha$ approximation algorithm.\\

We analyze two edge allocation algorithms which provide different guarantees for the finishing times of the coflows. The first algorithm $\mathrm{Greedy}$ is a simple greedy allocation scheme. This procedure was used by previous authors to derive $4$-approximation algorithms for Coflow Scheduling. Let $\mathrm{Greedy}(C_j)$ denote the finishing time of coflow $E_j$ in the schedule produced by $\mathrm{Greedy}$.
\begin{restatable}{lemma}{rslemmagreedy}\label{lemma:greedy}
For given deadlines $C_1,\dotsc,C_n$ for which \ref{lp:main} is feasible there is an algorithm $\mathrm{Greedy}$ returning a valid coflow schedule such that the following holds.
\begin{equation*}
\mathrm{Greedy}(C_j) \quad \le \quad 2C_j-1
\end{equation*}
\end{restatable}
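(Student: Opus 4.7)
The plan is to define $\mathrm{Greedy}$ as a first-fit assignment. The algorithm processes the coflows in order of nondecreasing deadlines $C_1 \le \dotsc \le C_n$, and within each coflow $E_j$ it iterates over the edges (respecting multiplicities) in an arbitrary fixed order, assigning each edge $e=(u,v)$ to the smallest positive integer slot $t$ containing no already-scheduled edge incident to $u$ or $v$. By construction the resulting schedule is valid (each slot forms a matching), so the task reduces to bounding the largest slot used by any edge of $E_j$.

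The key ingredient is a degree bound derived from feasibility of \ref{lp:main}. For a vertex $v$ and index $j$, let $d_v(j) := \sum_{i \le j}|\{e \in E_i : v \in e\}|$. Summing constraint $(II)$ over $s = 1,\dotsc,j$ yields $\sum_{s \le j}\sum_{e:v\in e} x_{s,e} \le C_j - C_0 = C_j$. By constraint $(I)$ each edge $e$ satisfies $\sum_s x_{s,e}=1$, and by constraint $(III)$ all of its mass lies in blocks up to its own coflow index. Hence every edge in $E_1 \cup \dotsc \cup E_j$ incident to $v$ contributes exactly $1$ to the left-hand side, giving $d_v(j) \le C_j$; since $d_v(j)$ is a nonnegative integer, we in fact obtain $d_v(j) \le \lfloor C_j \rfloor$.

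Now consider the moment at which some edge $e = (u,v) \in E_j$ is about to be placed by $\mathrm{Greedy}$. The slots already blocked at $u$ are precisely those containing a previously scheduled edge incident to $u$, and their number is at most $d_u(j) - 1$: every previously placed edge incident to $u$ belongs to one of $E_1,\dotsc,E_j$ and is therefore counted in $d_u(j)$, but $e$ itself is also counted in $d_u(j)$ and has not yet been placed. The analogous estimate holds at $v$. Consequently the number of blocked slots is at most $2\lfloor C_j \rfloor - 2 \le 2C_j - 2$, so the first-fit rule successfully places $e$ in some slot of value at most $2\lfloor C_j \rfloor - 1 \le 2C_j - 1$. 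Taking the maximum over $e \in E_j$ gives the desired bound. The main subtlety, and essentially the only one, is the sharpening by $1$: one has to observe that $e$ itself is counted in both $d_u(j)$ and $d_v(j)$ even though it has not yet been scheduled, which is precisely what turns the bound $2C_j$ into $2C_j - 1$; beyond this, the argument is a standard K\H{o}nig/first-fit style edge-coloring analysis once the LP-based degree bound is in hand.
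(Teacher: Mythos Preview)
Your proof is correct and follows essentially the same approach as the paper: derive from feasibility of \ref{lp:main} that each vertex has at most $\lfloor C_j\rfloor$ incident edges among $E_1\cup\dotsb\cup E_j$, subtract one for the edge $e$ itself, and conclude that first-fit places $e$ by slot $2C_j-1$. Your derivation of the degree bound from summing constraint $(II)$ and the handling of edges within $E_j$ processed before $e$ are, if anything, slightly more explicit than the paper's version.
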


The second algorithm $\mathrm{CBF}^\tau$ is a novel allocation scheme using a form of iterated rounding inspired by the Beck-Fiala Theorem from discrepancy theory \cite{Beck1981}. The algorithm is parameterized by $\tau \in \mathbb{N}_{\ge 2}$ and allocates the coflow deadlines to blocks, where each block's size is some integer multiple of $\tau$. An edge assignment is then determined which only slightly violates the size of each block. This leads to the following completion time guarantees.

\begin{restatable}{theorem}{rsthmcbf}\label{thm:cbf}
For given deadlines $C_1,\dotsc,C_n$ for which \ref{lp:main} is feasible, weights $\omega_1,\dotsc,\omega_n$, and a parameter $\tau \in \mathbb{N}_{\ge 2}$, there is an algorithm $\mathrm{CBF}^\tau$ returning a valid coflow schedule such that the following holds.
\begin{equation*}
   \sum_{j \in [n]} \omega_j \cdot \mathrm{CBF}^\tau(C_j) \quad \le \quad \sum_{j \in [n]}\omega_j \left(\frac{\tau+2}{\tau}C_j + \frac{\tau}{2} + 2.5 - \frac{2}{\tau}\right)
\end{equation*}
\end{restatable}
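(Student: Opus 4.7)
The plan is to design $\mathrm{CBF}^\tau$ as a three-phase procedure: round the coflow deadlines up to multiples of $\tau$, perform Beck--Fiala style iterated rounding on a refined LP whose atomic scheduling unit has length $\tau$, and finally realize each such unit as an explicit schedule via K\H{o}nig's theorem. In the first phase, for a shift $u \in \{0, \dotsc, \tau-1\}$, I set $\hat{C}_s := u + \tau\lceil (C_s - u)/\tau\rceil$, so that each $\hat{C}_s$ is aligned to the $\tau$-grid and $\hat{C}_s \ge C_s$. Since LP I is monotone in its deadlines, it remains feasible with respect to $\hat{C}_1 \le \dotsc \le \hat{C}_n$. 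Averaging over $u$ yields $\mathbb{E}_u[\hat{C}_s - C_s] \le (\tau-1)/2$, and a best-shift derandomization turns this into a deterministic bound on $\sum_j \omega_j(\hat{C}_j - C_j)$; this is what ultimately contributes the $\tau/2$ term in the theorem.

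For the second phase, I partition $[0, \hat{C}_n]$ into consecutive $\tau$-blocks $B_1, B_2, \dotsc$ of length $\tau$ each, and introduce a refined LP with variables $y_{b,e} \in [0,1]$:
\begin{align*}
\sum_b y_{b,e} &= 1 &&\forall e \in E,\\
\sum_{e: v \in e} y_{b,e} &\le \tau &&\forall b, \forall v \in V,\\
y_{b,e} &= 0 &&\text{if } B_b \text{ lies past } \hat{C}_j, \text{ for } e \in E_j.
\end{align*}
Distributing each LP I block's mass uniformly over its $\tau$-sub-blocks provides a feasible starting point. I then iterate: compute a vertex solution of the current LP, (a) permanently fix any variable that has hit $0$ or $1$ and remove it from the LP, and (b) drop any vertex-block constraint whose number of remaining fractional variables falls below a threshold $T$. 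Because each $y_{b,e}$ appears in exactly one edge constraint and at most two vertex-block constraints, a Beck--Fiala style double-counting of variable--constraint incidences at a vertex solution shows that at every iteration at least one of (a) or (b) is applicable. A careful choice of $T$ ensures that the final cumulative additive violation of each vertex-block constraint is at most $2$, while the edge constraints (I) are never dropped and are therefore integrally satisfied at the end.

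After iterated rounding, every $\tau$-block $B_b$ carries an integral edge set whose induced graph has maximum degree at most $\tau+2$, so by Theorem \ref{thm:koenig} it can be realized within $\tau+2$ consecutive time slots. Coflow $j$ thus finishes by time at most $(\hat{C}_j/\tau)(\tau+2) = \frac{\tau+2}{\tau}\hat{C}_j$, and combining this with the weighted-average bound on $\hat{C}_j - C_j$ from Phase 1 yields the multiplicative factor $\frac{\tau+2}{\tau}$ together with an additive term close to $\frac{\tau}{2}$. The remaining $2.5 - \frac{2}{\tau}$ slack is obtained by isolating the block containing $\hat{C}_j$: its Beck--Fiala violation contributes a bounded additive term to $\mathrm{CBF}^\tau(C_j)$ that does not multiply with $\hat{C}_j/\tau$, and a direct accounting of this last-block contribution closes the gap. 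I expect the main obstacle to be precisely the iterated rounding step in Phase 2: pinning down the threshold $T$ and the associated counting inequality that simultaneously (i) guarantees progress at every iteration, (ii) keeps all edge constraints (I) exactly satisfied, and (iii) caps the total per-vertex per-block discrepancy at $2$. A secondary obstacle is the careful bookkeeping of additive constants in Phase 3 needed to match the exact form $\frac{\tau}{2} + 2.5 - \frac{2}{\tau}$.
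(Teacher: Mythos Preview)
Your three-phase plan --- round deadlines to a shifted $\tau$-grid, Beck--Fiala iterated rounding with per-block additive violation at most~$2$, then K\H{o}nig --- is exactly the paper's approach, and your identification of the counting threshold as the crux is correct (the paper takes $T=4$ and handles the resulting non-strict inequality by moving one degree constraint into the objective).

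One point where your write-up is internally inconsistent and diverges from the paper: your Phase~1 produces $\hat C_j \equiv u \pmod \tau$, but your Phase~2 partitions $[0,\hat C_n]$ into blocks $[0,\tau],[\tau,2\tau],\dotsc$ starting at~$0$, so for $u>0$ the rounded deadlines do not sit on block boundaries and your bound $\mathrm{CBF}^\tau(C_j)\le \frac{\tau+2}{\tau}\hat C_j$ is not well-defined. The paper resolves this by inserting an explicit first block of size $\lambda$ (your $u$), which itself incurs a $+2$ violation and adds one to the block count; this extra $+2$ is precisely why the additive constant is larger than the $\frac{\tau+2}{\tau}\cdot\frac{\tau-1}{2}$ your Phase~1 averaging suggests. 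Once you account for the first block correctly, averaging over the naive set $\{0,1,\dotsc,\tau-1\}$ yields additive constant $\frac{\tau}{2}+2.5$, off by $\frac{2}{\tau}$ from the claim. The paper recovers the exact $\frac{\tau}{2}+2.5-\frac{2}{\tau}$ by averaging over the slightly unusual set $\{0,2,3,\dotsc,\tau-1,\tau+1\}$ (skipping $\lambda=1$ and replacing $\lambda=\tau$ by $\tau+1$); this is the ``careful bookkeeping'' you anticipated, and it is not obtainable from your last-block isolation idea alone.
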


Note that the approximation guarantees of both algorithms are quite different. $\mathrm{Greedy}$ achieves rather strong approximation for small deadlines, while for large deadlines, through an appropriate choice of $\tau$, $\mathrm{CBF}^\tau$ gives good guarantees. In fact, under certain assumptions on the provided deadlines, $\mathrm{CBF}^\tau$ can achieve approximations arbitrarily close to the optimum of $1$, see Section \ref{sec:asymp}.\\

Our coflow algorithm aims to achieve guarantees for both small and large deadlines by combining both algorithms in some way. The procedure is straightfoward. It obtains the deadlines $C_1,..,C_n$ through the procedure explained in Section \ref{sec:coflowdead} and then runs both $\mathrm{Greedy}$ and $\mathrm{CBF}^6$ independently, returning the schedule with lower cost. Analyzing the cost of the returned solution requires some care, as we need a uniform bound over any possible input instance.

\subsection{Combining Algorithmic Guarantees}\label{sec:combalgguarantees}
By the definition of the coflow algorithm, for each possible instance $\mathcal{I}$, its cost $C_{\mathrm{ALG}}(\mathcal{I})$ is given as the minimum of the costs $C_G(\mathcal{I})$ and $C_{\mathrm{CBF}}(\mathcal{I})$ of the $\mathrm{Greedy}$ and respectively $\mathrm{CBF}^6$ algorithm.\\
We show a general proof framework for such algorithms which provide deadline guarantees, which gives sharp bounds for taking minimums over several algorithms' costs. The derivation is not difficult, but the framework offers a surprisingly simple and strong method to establish bounds for large classes of algorithms. It only requires bounds for the delay guarantees of the algorithms, which are usually relatively simple to establish.\\
For this purpose, let $C_1,\dotsc,C_n$ be deadlines for which \ref{lp:main} is feasible and let $\mathrm{ALG}_1,\dotsc,\mathrm{ALG}_k$ be algorithms producing valid coflow schedules from such deadlines, with $\mathrm{ALG}_i(C_j)$ being the finishing time of coflow $E_j$ in the schedule produced by $\mathrm{ALG}_i$. For $j \in [k]$, let $f_j$ be a function capturing a bound on the maximum weighted deadline delay of $\mathrm{ALG}_j$. This means that $f_j$ is such that $\sum_{j \in [n]}\omega_j \cdot \mathrm{ALG}_j(C_j) \le \sum_{j \in [n]}\omega_j \cdot f_j(C_j)$. Such functions might stem from bounds on individual deadlines like in the case of $\mathrm{Greedy}$, but can also come from bounds which are already given as a weighted sum over all deadlines like for $\mathrm{CBF}^\tau$.

\begin{restatable}{lemma}{rsthmappbound}
\label{lemma:appbound}
Let $\lambda_1,\dotsc,\lambda_k \ge 0$ with $\sum_{i \in [k]}\lambda_i = 1$ and $\alpha \in \mathbb{R}^+$. If for all $x \ge 1$
\begin{equation*}
    \sum_{i \in [k]}\lambda_i f_i(x) \quad \le \quad \alpha(x+1),
\end{equation*}
then for all coflow instances $\mathcal{I}$:
\begin{equation*}
    C_{\mathrm{ALG}}(\mathcal{I}) \Hquad=\Hquad \min\{C_{\mathrm{ALG}_1}(\mathcal{I}),\dotsc,C_{\mathrm{ALG}_k}(\mathcal{I})\} \Hquad\le\Hquad 2\alpha \cdot \mathrm{OPT}(\mathcal{I})
\end{equation*}
\end{restatable}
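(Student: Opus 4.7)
The plan is to use the elementary fact that the minimum of a list of numbers is upper bounded by any convex combination of those numbers, and to then convert the pointwise hypothesis on $\sum_i \lambda_i f_i$ into a global bound on expected cost via Lemma~\ref{lemma:2approxstrong}. The hypothesis is crafted so that the additive $+1$ on the right-hand side exactly absorbs the $-\sum_j \omega_j$ correction in the deadline cost bound, giving a clean factor of $2\alpha$.

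Concretely, since $\lambda = (\lambda_1,\dotsc,\lambda_k)$ is a probability distribution,
$$C_{\mathrm{ALG}}(\mathcal{I}) \;=\; \min_{i \in [k]} C_{\mathrm{ALG}_i}(\mathcal{I}) \;\le\; \sum_{i \in [k]} \lambda_i\, C_{\mathrm{ALG}_i}(\mathcal{I}) \;=\; \sum_{i \in [k]} \lambda_i \sum_{j \in [n]} \omega_j \,\mathrm{ALG}_i(C_j).$$
Substituting the delay guarantee $\sum_j \omega_j \mathrm{ALG}_i(C_j) \le \sum_j \omega_j f_i(C_j)$ for each $\mathrm{ALG}_i$ and swapping the order of summation,
$$C_{\mathrm{ALG}}(\mathcal{I}) \;\le\; \sum_{j \in [n]} \omega_j \sum_{i \in [k]} \lambda_i f_i(C_j) \;\le\; \alpha \sum_{j \in [n]} \omega_j (C_j + 1),$$
where the last step applies the hypothesis pointwise at $x = C_j$. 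Finally, taking expectation over the randomness in the deadline construction and invoking Lemma~\ref{lemma:2approxstrong} yields
$$\alpha \sum_{j \in [n]} \omega_j \bigl(\mathbb{E}[C_j] + 1\bigr) \;\le\; \alpha\Bigl(2\cdot \mathrm{OPT}(\mathcal{I}) - \sum_{j \in [n]} \omega_j + \sum_{j \in [n]} \omega_j\Bigr) \;=\; 2\alpha\cdot \mathrm{OPT}(\mathcal{I}).$$

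The main conceptual ingredient — the min-versus-convex-combination step — is one line, and the rest is bookkeeping; the only point of care is verifying that the hypothesis domain $x \ge 1$ is always respected, i.e., that $C_j \ge 1$ for every coflow. This holds because any nonempty coflow occupies at least one time slot in any feasible schedule, and the deadlines produced by the rounding procedure of Lemma~\ref{lemma:2approxstrong} correspond to such completion times. What makes the framework powerful is precisely that establishing the approximation ratio of a combined algorithm is reduced to proving a single one-dimensional inequality about $\sum_i \lambda_i f_i(x)$, with no need to couple the individual algorithms' behaviors to the deadline structure on an instance-by-instance basis.
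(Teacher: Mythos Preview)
Your proof is correct and follows essentially the same approach as the paper. The only cosmetic difference is that the paper phrases the ``minimum is at most any convex combination'' step via an auxiliary randomized algorithm $\mathrm{RALG}$ that runs $\mathrm{ALG}_i$ with probability $\lambda_i$, whereas you invoke the inequality directly; both arrive at the same chain of bounds and then plug in Lemma~\ref{lemma:2approxstrong} identically.
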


\begin{proof}
Let $\lambda_1,\dotsc,\lambda_k \ge 0$ be fixed constants with $\sum_{j \in [k]}\lambda_j = 1$. Define $g(x) := \lambda_1 f_1(x) + \lambda_2 f_2(x) + \cdots + \lambda_kf_k(x)$. We define a randomized algorithm $\mathrm{RALG}$ which for all $j \in [k]$ runs algorithm $\mathrm{ALG}_j$ with probability $\lambda_j$. With these definitions, for the cost $C_{\mathrm{ALG}}$ of the combined algorithm $\mathrm{ALG}$ we obtain:
\begin{equation*}
C_{\mathrm{ALG}} \Hquad=\Hquad \min \{C_{\mathrm{ALG}_1},\dotsc,C_{\mathrm{ALG}_k}\}
\Hquad\le\Hquad \sum_{j \in [k]}\lambda_j \cdot C_{\mathrm{ALG}_j}\Hquad =\Hquad \mathbb{E}[C_{\mathrm{RALG}}]
\end{equation*}
For the expected cost of $\mathrm{RALG}$ we have
\begin{align*}
\mathbb{E}[C_{\mathrm{RALG}}]&\Hquad=\Hquad \mathbb{E}\Big[\sum_{j \in [n]}\omega_j \mathrm{RALG}(C_j)\Big] \Hquad=\Hquad \sum_{j \in [n]}\omega_j \mathbb{E}[\mathrm{RALG}(C_j)]\\
~&\phantom{------}\Hquad\le\Hquad \sum_{j \in [n]}\omega_j \left(\lambda_1f_1(C_j) + \cdots + \lambda_kf_k(C_j)\right) \Hquad=\Hquad \sum_{j \in [n]}\omega_j g(C_j).\\
\end{align*}
So by establishing suitable bounds for $g(x)$, we can show approximation bounds for $\mathrm{ALG}$. Assume that there exists some $\alpha \in \mathbb{R}_+$ such that $g(x) \le \alpha(x+1)$. Then we can further bound
\begin{equation*}
 \sum_{j \in [n]}\omega_j g(C_j) \Hquad\le\Hquad \sum_{j \in [n]}\omega_j (\alpha(C_j+1)) \Hquad\le\Hquad \alpha \cdot \sum_{j \in [n]}\omega_jC_j \Hquad + \Hquad \alpha \cdot \sum_{j \in [n]}\omega_j.
\end{equation*}
By using the deadlines $C_1,\dotsc,C_n$ provided by Lemma \ref{lemma:2approxstrong} for which $\sum_{j \in [n]}\omega_j C_j \le 2 \cdot \mathrm{OPT} - \sum_{j \in [n]}\omega_j$ holds, this yields the desired bound:
\begin{equation*}
    C_{\mathrm{ALG}} \quad\le\quad 2\alpha \cdot \mathrm{OPT}\qedhere
\end{equation*}
\end{proof}

\subsection{Main Theorem}
Using the previous lemmata and theorems, we prove Theorem \ref{thm:coflowapprox}. The proof follows by application of the framework from Theorem \ref{lemma:appbound} to selected edge allocation algorithms.

\rsthmcoflowapprox*

\begin{proof}
For completeness, we restate the algorithm which has implicitly been described earlier. Given some coflow instance $\mathcal{I}$, we first determine deadlines using the procedure described in Section \ref{sec:coflowdead}. We then apply the two edge allocation algorithms $\mathrm{Greedy}$ and $\mathrm{CBF}^6$ to the deadlines to obtain two feasible coflow schedules and return the schedule with lower total cost.\\

Calling this algorithm $\mathrm{ALG}$, its cost $C_{\mathrm{ALG}}(\mathcal{I})$ is thus given as $\min\{C_{\mathrm{Greedy}}(\mathcal{I}), C_{\mathrm{CBF}^6}(\mathcal{I})\}$. We aim to use Theorem \ref{lemma:appbound} to bound the approximation ratio of $\mathrm{ALG}$. For this purpose, let $f_G$ be a function capturing an upper bound on the deadline delay of $\mathrm{Greedy}$ in the sense required by Theorem \ref{lemma:appbound} and respectively $f_{\mathrm{CBF}}$ for $\mathrm{CBF}^6$. From Lemma \ref{lemma:greedy} and Theorem \ref{thm:cbf} we obtain that
\begin{equation*}
    f_G(x) \Hquad=\Hquad  2x-1 \qquad \text{ and } \qquad f_{\mathrm{CBF}}(x) \Hquad=\Hquad \frac{4}{3}x + \frac{31}{6}   
\end{equation*}
holds. Let $\lambda_1 := 23/41$ and $\lambda_2 := 18/41$. This yields:
\begin{equation*}
    \lambda_1 f_G(x) + \lambda_2f_{\mathrm{CBF}}(x) \Hquad=\Hquad (2 \cdot \tfrac{23}{41} + \tfrac{4}{3}\cdot \tfrac{18}{41})x + (\tfrac{31}{6} \cdot \tfrac{18}{41} - \tfrac{23}{41}) \Hquad=\Hquad \tfrac{70}{41}(x+1)
\end{equation*}
So the requirements of Theorem \ref{lemma:appbound} are fulfilled with $\alpha = \tfrac{70}{41}$, which implies that $\mathrm{ALG}$ is a $2 \cdot \tfrac{70}{41} = \tfrac{140}{41} < 3.415$-approximation algorithm for Coflow Scheduling without release dates.
\end{proof}

\section{Integral Edge Assignments with Guarantees}\label{sec:schedflows}

In this section we introduce and analyze algorithms which allocate edges of coflows to time-slots. They work on coflow deadlines fulfilling certain structural properties and their goal is to provide feasible schedules together with guarantees on the average delay each coflow experiences. Similar strategies are also used in most of the previous $4$-approximations for Coflow Scheduling. We introduce the algorithms $\mathrm{Greedy}$ and $\mathrm{CBF}^\tau$ and show their guarantees in Lemma \ref{lemma:greedy} and Theorem \ref{thm:cbf}.

\subsection{Greedy Scheduling}
We start by introducing and analyzing a greedy allocation algorithm $\mathrm{Greedy}$, which is one of the edge allocation procedures used in previous works to achieve a $4$-approximation. Let $C_1 \le C_2 \le \cdots \le C_n$ be deadlines for which \ref{lp:main} is feasible. $\mathrm{Greedy}$ schedules all coflows consecutively, starting with $E_1$ up to $E_n$. Each edge is simply scheduled in a work-conserving way, meaning that it is scheduled in the earliest possible time-slot in which both its vertices are free. By doing this for all edges, a schedule is obtained. For $j \in [n]$, let $\mathrm{Greedy}(C_j)$ be the finishing time of coflow $E_j$ in the schedule obtained from this procedure.

\rslemmagreedy*

\begin{proof}
Consider some fixed $j\in [n]$ and let $e = (u,v) \in E_j$ be an edge on which coflow $j$ finishes. As \ref{lp:main} is feasible for the deadlines, for both $u$ and $v$ at most $C_j - 1$ flow which contains one of these vertices from earlier coflows can exist. This implies that at most $\lfloor C_j-1\rfloor$ edges in $(\cup_{i \le j}E_i)\setminus \{e\}$ can be adjacent to each of $u$ and $v$. So these edges can block at most $2(\lfloor C_j - 1 \rfloor) \le 2C_j -2$ time slots, which implies that $\mathrm{Greedy}$ schedules $e$ the latest in slot $2C_j - 1$.
\end{proof}

$\mathrm{Greedy}$ provides a strict deadline guarantee for each coflow, meaning that in the schedule produced every coflow finishes the latest at the provided bound. This also implies that the same guarantee holds when taking weighted sums over the finishing times. Note that $2C_j - 1 = (2 - \frac{1}{C_j})C_j$, so for small $C_j$ this gives a tangible improvement over the factor $2$.


\subsection{Iterated Rounding using Beck-Fiala} \label{sec:beckfiala}
This section gives a proof of Theorem \ref{thm:cbf}. We start by providing a full description of the algorithm, then we give a preliminary analysis and subsequently strengthen the guarantee through further refinements.
\subsubsection*{Procedure Idea}

Given some deadlines $C_1 \le C_2 \le \dotsc \le C_n$ for which \ref{lp:main} is feasible, the core idea is to round these deadlines to the next integer multiple of some parameter $\tau \in \mathbb{N}_{\ge 2}$ and to then form blocks between consecutive rounded deadlines. With these blocks and associated deadlines, we show that it is possible to allocate all edges to blocks while only violating the block size by a small additive constant. Given such an allocation, using the guarantee provided by Kőnig's Theorem (Theorem \ref{thm:koenig}) there exists a feasible schedule containing all assigned edges within the maximum vertex load of each block. Through the rounding and the increase in blocks' sizes the finishing times of the coflows are delayed with respect to their deadlines. We are however able to show strong bounds on this delay.\\
We call this algorithm $\mathrm{CBF}^\tau$ due to its close association with the proof of the Beck-Fiala Theorem \cite{Beck1981}.

\subsubsection*{Edge-to-Block Allocation LP}
We use a rounding technique inspired by the proof of the Beck-Fiala Theorem. Let $\tau \in \mathbb{N}_{\ge 2}$ be a fixed constant. For $j \in [n]$ let $\bar{C}_j$ be the deadline $C_j$ rounded up to the next integer multiple of $\tau$. The blocks' sizes are defined by the distance between two non-equal consecutive deadlines. So for $\bar{C}_j \neq \bar{C}_{j-1}$, block $j$ has size $\bar{C}_j - \bar{C}_{j-1}$. We define the following LP, which models the allocation of coflow edges to blocks. One can assume without loss of generality that all rounded deadlines are distinct and that there are $n$ of them, as coflows whose rounded deadlines are equal can be joined in this step.

\begin{gather*}\tag{LP CBF}\label{lp:cbf}
\begin{aligned}
\sum_{b \in [n]}x_{e,b} \quad &= \quad 1 &&\forall e \in E &\qquad(I)\\
\sum_{e: v \in e} x_{e,b} \quad&\le \quad \bar{C}_b - \bar{C}_{b-1} &&\forall v \in V,\forall b \in [n]&\qquad(II)\\
x_{e,b} \quad&= \quad 0\quad &&\forall j \in [n],\forall e \in E_j,\forall b > j\\
x_{e,b} \quad&\ge\quad 0
\end{aligned}
\end{gather*}

The structure of \ref{lp:cbf} is identical to \ref{lp:main}, though the special form of the rounded deadlines induces some additional properties. By definition $\bar{C}_j \ge C_j$ for all $j \in [n]$. Additionally, with $\bar{C}_0 = 0$, for each $b \in [n]$ we have $\bar{C}_b - \bar{C}_{b-1} = k_b \cdot  \tau$, for some $k_b \in \mathbb{N}$.
We identify each edge and vertex in the coflow instance with the respective variable in the LP and use both terms interchangeably.\\

We claim that feasiblity of  \ref{lp:main} for $C_1,\dotsc,C_n$ directly implies feasibility of \ref{lp:cbf} for $\bar{C}_1,\dotsc,\bar{C}_n$. For \ref{lp:main}, increasing the value of any deadline without violating the total order can only increase the feasible region, as the equal zero constraints get less restrictive and any possible excess assignment can be shifted between the two adjacent blocks whose size changes. Therefore, as all deadlines can only increase, \ref{lp:cbf} also has to be feasible.

\subsubsection*{LP Rounding}
We describe a procedure which finds an integral edge-to-block assignment violating the block size constraint $(II)$ by a constant amount. To achieve this, we start with an initial solution to the LP and then successively refine and resolve the LP, until we have obtained an integral solution fulfilling certain strong properties.\\

From now on we assume that we started with deadlines $C_1,\dotsc,C_n$ for which \ref{lp:main} is feasible, so we know that for $\bar{C}_1,\dotsc,\bar{C}_n$ \ref{lp:cbf} is feasible. After obtaining an initial solution to \ref{lp:cbf}, we take two steps. We first fix all integral edges and remove their respective variables from the LP and modify the right hand side of $(II)$ accordingly and then we delete all constraints from $(II)$ with at most $k-1$ fractional variables remaining, for some fixed number $k \in \mathbb{N}$. This corresponds to dropping the degree constraint on the respective vertex if at most $k-1$ adjacent edges are still fractional. Let $\mathcal{E}_{b}$ be the set of all fractional edges contained in block $b$ and let $\mathcal{V}_{b}$ be the vertices in block $b$ with at least $k$ fractional adjacent edges. Let $S_{v,b}$ be the set of already fixed edges adjacent to $v$ in block $b$.
This gives rise to the following resulting LP:

\begin{gather*}
\begin{aligned}
\sum_{b \in [n]}x_{e,b} \quad &= \quad 1 &&\forall e \in \bigcup_{b \in [n]}\  \mathcal{E}_b& \quad(I)\\
\sum_{e \in \mathcal{E}_b: v \in e} x_{e,b} \quad&\le \quad k_b \cdot \tau - |S_{v,b}|\quad &&\forall b \in [n], v \in \mathcal{V}_b &\quad(II)\\
x_{e,b}\quad&\ge\quad 0 && \forall b \in [n], e \in \mathcal{E}_b
\end{aligned}
\end{gather*}

If we can show that this LP always contains strictly more variables than it contains constraints in $(I)$ and $(II)$, by considering a basic feasible solution, we obtain at least one more integral variable, so repeating the fixing variables and removing constraints step leads to at least one more fixed variable. Therefore in a polynomial number of steps the procedure must terminate and we obtain an integral solution. The step in which we drop constraints means that this integral solution is most likely not feasible for the original LP, but we later show that the amount of violation cannot be very large, which yields the desired approximation behaviour.

\subsubsection*{Constraints and Variables}
We want to show that the number of constraints is strictly smaller than the number of variables. The total number of variables in the LP is equal to $\sum_{b \in [n]}|\mathcal{E}_b|$. The number of constraints in $(I) \cup (II)$ is equal to $|\bigcup_{b \in [n]} \mathcal{E}_b| + \sum_{b \in [n]}|\mathcal{V}_b|$.
We show two bounds which enable us to establish the desired inequality.
\begin{lemma}
    For all $b \in [n]:$ $\quad|\mathcal{V}_b| \quad \le \quad \frac{2}{k}|\mathcal{E}_b|$
\end{lemma}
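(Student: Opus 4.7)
The plan is to prove this by a straightforward double-counting (handshake) argument on the bipartite multigraph induced by the fractional edges of block $b$.

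First I would fix a block $b \in [n]$ and consider the edge set $\mathcal{E}_b$ together with its induced (bipartite) multigraph on the vertex set $V$. Counting endpoint–incidences in two ways gives
\begin{equation*}
    \sum_{v \in V} \deg_{\mathcal{E}_b}(v) \Hquad=\Hquad 2|\mathcal{E}_b|,
\end{equation*}
where $\deg_{\mathcal{E}_b}(v)$ denotes the number of edges of $\mathcal{E}_b$ incident to $v$ (counted with multiplicity).

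Next I would use the defining property of $\mathcal{V}_b$: by construction, every vertex $v \in \mathcal{V}_b$ has at least $k$ fractional adjacent edges in block $b$, i.e. $\deg_{\mathcal{E}_b}(v) \ge k$. Restricting the sum above to vertices in $\mathcal{V}_b$ and dropping the (nonnegative) contribution of vertices outside $\mathcal{V}_b$ yields
\begin{equation*}
    k \cdot |\mathcal{V}_b| \Hquad\le\Hquad \sum_{v \in \mathcal{V}_b} \deg_{\mathcal{E}_b}(v) \Hquad\le\Hquad \sum_{v \in V} \deg_{\mathcal{E}_b}(v) \Hquad=\Hquad 2|\mathcal{E}_b|.
\end{equation*}
Dividing by $k$ gives the claimed inequality $|\mathcal{V}_b| \le \frac{2}{k}|\mathcal{E}_b|$.

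There is really no obstacle here; the only minor point to be careful about is that $\mathcal{E}_b$ may contain parallel edges (the excerpt explicitly allows duplicates in coflows), so $\deg_{\mathcal{E}_b}(v)$ must be counted with multiplicity and the identity $\sum_v \deg_{\mathcal{E}_b}(v) = 2|\mathcal{E}_b|$ holds by summing contributions edge by edge rather than invoking simple-graph facts. Beyond that, the statement is a one-line consequence of the definitions.
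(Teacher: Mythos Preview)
Your proof is correct and is essentially identical to the paper's argument: both double-count vertex--edge incidences in block $b$, using that each $v\in\mathcal{V}_b$ contributes at least $k$ incidences while each edge contributes at most two. The paper phrases this via the incidence set $\{(v,e)\}$ rather than degree sums, but the content is the same.
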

\begin{proof}
As by definition each vertex in $\mathcal{V}_b$ has at least $k$ fractional adjacent edges, we obtain the following. 
\begin{equation*}
|\mathcal{V}_b| \cdot k \quad \le \quad |\{(v,e) \ |\ \forall v \in \mathcal{V}_b, \forall e \in \delta_E(v) \cap \mathcal{E}_b\}|
\end{equation*}
Each edge contains exactly two vertices, so we additionally have the following inequality:
\begin{equation*}
|\{(v,e)\ |\ \forall v \in \mathcal{V}_b, \forall e \in \delta_E(v) \cap \mathcal{E}_b\}|\quad\le\quad 2 \cdot |\mathcal{E}_b|
\end{equation*}
Combining the two inequalities gives the result.
\end{proof}

\begin{lemma}
For all $b \in [n]:$ $\quad|\bigcup_{b \in [n]}\mathcal{E}_b|\quad \le \quad \frac{1}{2}\sum_{b \in [n]}|\mathcal{E}_b|$
\end{lemma}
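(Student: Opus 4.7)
The plan is to prove the inequality via a double-counting argument. Rewriting the right-hand side yields
\begin{equation*}
\sum_{b \in [n]}|\mathcal{E}_b| \Hquad=\Hquad \sum_{e \in \bigcup_{b \in [n]}\mathcal{E}_b} \bigl|\{b \in [n] : e \in \mathcal{E}_b\}\bigr|,
\end{equation*}
so it suffices to show that every fractional edge $e \in \bigcup_{b}\mathcal{E}_b$ appears in at least two of the sets $\mathcal{E}_b$. If this holds, each term on the right is at least $2$ and the desired bound $|\bigcup_b \mathcal{E}_b| \le \tfrac{1}{2}\sum_b |\mathcal{E}_b|$ follows immediately.

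To establish the "at least two blocks" property, I would use constraint $(I)$ of the current LP. Suppose, for contradiction, that some edge $e \in \bigcup_b \mathcal{E}_b$ appears in exactly one set $\mathcal{E}_{b^\star}$. Then $x_{e,b^\star}$ is the unique variable of $e$ that is still fractional in the current LP; all other variables $x_{e,b}$ with $b \neq b^\star$ have already been fixed to a value in $\{0,1\}$ in a previous iteration. I would first argue that none of these previously fixed values can equal $1$: whenever the iterated rounding procedure fixes some $x_{e,b'}=1$, constraint $(I)$ in that iteration forces all remaining $x_{e,b}$ for this edge to be $0$, so no fractional variable of $e$ would survive into a later LP, contradicting $e \in \mathcal{E}_{b^\star}$.

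Hence every already-fixed variable of $e$ equals $0$, and therefore constraint $(I)$ as it currently appears in the LP reduces to $x_{e,b^\star} = 1$. This contradicts $x_{e,b^\star}$ being fractional, and we conclude that each fractional edge must appear in at least two blocks, completing the proof.

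I do not expect any serious obstacle; the only point requiring care is checking that the bookkeeping of the iterated rounding procedure is consistent with the claim that a surviving fractional edge cannot have had any of its variables previously fixed to $1$. The argument above handles this via constraint $(I)$, and everything else is pure double counting.
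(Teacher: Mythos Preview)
Your proposal is correct and follows essentially the same approach as the paper: both use constraint $(I)$ to argue that any edge with a fractional variable must have at least two fractional variables (hence appears in at least two $\mathcal{E}_b$), and then the inequality follows by double counting. Your version is simply more explicit about the bookkeeping across iterations, whereas the paper states the key observation in one sentence.
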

\begin{proof}
For arbitrary sets, the bound is only true without the factor $\tfrac{1}{2}$ on the right hand side. Equality is reached exactly when all elements are unique. In our case, whenever there is a fractional edge, due to constraint (I), at least one other variable associated to this edge in another block has to be fractional as well. Hence, these contribute at least twice to the right hand side and only once to the left hand side, which gives the inequality. 
\end{proof}

\noindent Combining the two lemmata, we obtain:

\begin{equation*}
    |\mathrm{Cons}| \Hquad=\Hquad |\bigcup_{b \in [n]} \mathcal{E}_b| + \sum_{b \in [n]}|\mathcal{V}_b|\Hquad \le\Hquad \frac{1}{2}\sum_{b \in [n]}|\mathcal{E}_b| + \frac{2}{k}\sum_{b \in [n]}|\mathcal{E}_b| \Hquad=\Hquad \left(\frac{1}{2} + \frac{2}{k}\right)|\mathrm{Vars}|
\end{equation*}

So for all $k > 4$ a strict inequality follows. For $k=4$ we have the inequality $|\mathrm{Cons}| \le |\mathrm{Vars}|$. We can however still achieve a strict inequality for this case by slightly modifying the LP. In its current form, the LP is given without an objective function. We can thus remove one constraint from $(II)$ and shift it to the objective function instead. This reduces the number of constraints by one without changing the number of variables. If $b, v \in \mathcal{V}_b$ are the parameters corresponding to the chosen inequality, the added objective function is $\min \sum_{e\in\mathcal{E}_b: v \in e}x_{e,b}$. From the minimization objective it follows that feasible optimal points of the modified LP are feasible for the original LP, as the removed constraint cannot be violated.

\subsubsection*{Delay Bound}
\noindent Looking at the integral assignment, we can show that the violation of constraints of \ref{lp:cbf} is small. Note that the following statement only requires integrality of the deadlines and not the special structure of the rounded deadlines.
\begin{lemma}\label{lemma:cbflpviolation}
Given integral deadlines $C_1,\dotsc,C_n$ for which \ref{lp:cbf} is feasible and a parameter $\tau \in \mathbb{N}_{\ge 2}$, in polynomial time we can find an integral point such that:
\begin{itemize}
    \item[a)] All constraints $(II)$ in \ref{lp:cbf} are exceeded by at most $2$.
    \item[b)] All other constraints in \ref{lp:cbf} are fulfilled.
\end{itemize}
\end{lemma}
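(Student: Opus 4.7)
The plan is to execute the iterated-rounding loop already outlined in the setup: start from a basic feasible solution of the reduced LP (with one constraint from $(II)$ shifted into the objective so that $k=4$ yields the strict inequality $|\mathrm{Cons}|<|\mathrm{Vars}|$), fix every variable that already takes a value in $\{0,1\}$, update the right-hand sides of $(II)$ accordingly, then delete every remaining constraint in $(II)$ for which at most $k-1=3$ fractional adjacent edges remain, and re-solve. By the counting already proved, a basic feasible solution of each reduced LP must contain at least one additional integral entry, so at least one variable becomes permanently fixed per iteration. Since the total number of variables is polynomial in the input and each iteration solves an LP of polynomial size, the whole procedure runs in polynomial time and terminates with a fully integral assignment.

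The heart of the proof is the violation bound for $(II)$ in part (a). Fix a constraint indexed by block $b$ and vertex $v$, and consider the iteration in which it is dropped. At that moment at most $p\le 3$ edges adjacent to $v$ in block $b$ are still fractional, contributing fractional sum $\alpha$, and the set $S_{v,b}$ of already-fixed adjacent edges of value $1$ has size $s_1$. Because the current LP is feasible, I have
\begin{equation*}
s_1 + \alpha \Hquad\le\Hquad C_b - C_{b-1}.
\end{equation*}
After the constraint is dropped, constraint $(I)$ is still enforced, which forces each of the $p$ edges to eventually take a value in $\{0,1\}$ at block $b$. Hence the left-hand side of the original constraint at termination is at most $s_1+p$, so the violation is bounded by $p-\alpha$. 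Crucially, every still-fractional LP value lies strictly inside $(0,1)$, so $\alpha>0$ whenever $p\ge 1$; since $C_b-C_{b-1}$ and $s_1+p$ are integers, the violation is itself an integer strictly smaller than $p\le 3$, which therefore is at most $2$.

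For part (b), the remaining constraints are immediate. Constraint $(I)$, the equal-zero constraints for $b>j$, and nonnegativity are never dropped and are satisfied by every intermediate LP solution; at termination all values are integral, so they remain satisfied. Non-dropped constraints of $(II)$ likewise remain in force throughout and are satisfied as stated. The decisive technical point I expect to be the main obstacle is precisely the sharpening from the naive bound of $+3$ (three fractional edges, each possibly rounding up to $1$) down to $+2$, which requires combining strict fractionality $\alpha>0$ with integrality of the final violation in order to promote the strict inequality $p-\alpha<p$ to the integer bound $\le p-1$.
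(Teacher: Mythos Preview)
Your proposal is correct and follows essentially the same approach as the paper. In particular, your sharpening from $+3$ to $+2$ via the observation that the fractional contribution $\alpha$ is strictly positive, combined with integrality of both the final left-hand side and the right-hand side, is exactly the mechanism the paper uses (phrased there as ``the sum over the remaining integral variables at the time of removal can be at most [the right-hand side]${}-1$'', yielding a violation of at most $k-2$).
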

\begin{proof}
At all times during the procedure, the intermediate LP solutions fulfill the constraints not in $(II)$, so b) follows.
Remember that we never change variables once they are integral and that we only remove constraints from $(II)$ if the number of fractional variables in the sum is at most $k-1$. This shows that each constraint in $(II)$ can only be violated by an additive term of $k-1$. In fact, as the fractional variables have a strictly positive sum, the sum over the remaining integral variables at the time of removal can be at most $\tau-1$, which implies that the violation is at most $k-2$. For the choice $k=4$, a) follows.\\
\end{proof}

\noindent Using this result, we can show an upper bound on the maximum delay for each deadline when applying $\mathrm{CBF}^\tau$. In total, we obtain the following lemma, which at this point is slightly weaker than required for Theorem \ref{thm:cbf}, as there is an additive constant of $\tau+2$ instead of $\tau/2 + 2.5 - \frac{2}{\tau}$. Nevertheless, the theorem in this form would already be sufficient to gain a significant improvement over the factor $4$-approximation. Like in the case of $\mathrm{Greedy}$, $\mathrm{CBF}^\tau(C_j)$ is the finishing time of coflow $E_j$ in the schedule created by $\mathrm{CBF}^\tau$. 
\begin{lemma}\label{lemma:cbfweak}
For given deadlines $C_1,\dotsc,C_n$ for which \ref{lp:main} is feasible and a parameter $\tau \in \mathbb{N}_{\ge 2}$, there is an algorithm $\mathrm{CBF}^\tau$ returning a valid coflow schedule such that the following holds for all $j \in [n]$.
\begin{equation*}
    \mathrm{CBF}^\tau(C_j) \quad \le \quad \frac{\tau+2}{\tau}C_j + \tau + 2
\end{equation*}
\end{lemma}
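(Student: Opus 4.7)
The plan is to leverage Lemma \ref{lemma:cbflpviolation}: starting from deadlines $C_1,\dotsc,C_n$ for which \ref{lp:main} is feasible, the rounded deadlines $\bar{C}_1,\dotsc,\bar{C}_n$ make \ref{lp:cbf} feasible, and the iterated rounding procedure yields an integral edge-to-block allocation in which every constraint of type $(II)$ is violated by at most $2$. From here, the task reduces to translating this block-level guarantee into a coflow-level completion-time guarantee.

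The first step is to construct the actual schedule block by block. For a fixed block $b$, view the integral assignment as a bipartite (multi)graph on $V$ consisting of exactly the edges allocated to $b$. By Lemma \ref{lemma:cbflpviolation} this graph has maximum degree at most $(\bar{C}_b - \bar{C}_{b-1}) + 2$, so by Kőnig's Theorem (Theorem \ref{thm:koenig}) it admits a proper edge-coloring with that many colors, i.e.\ a partition into that many matchings which we execute in consecutive time slots. Laying the block schedules end-to-end in block order produces a global schedule respecting the block structure, and since the blocks occupy disjoint intervals of time slots there is no additional overhead from concatenation.

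The second step is to bound each coflow's completion time. Assuming without loss of generality that the distinct rounded deadlines are $\bar{C}_1 < \bar{C}_2 < \cdots$, constraint $(III)$ of \ref{lp:cbf} forces every edge of coflow $E_j$ to live in some block $b$ with $\bar{C}_b \le \bar{C}_j$. If coflow $j$ is associated with block index $m$ (so $\bar{C}_m = \bar{C}_j$), all of its edges finish by time
\begin{equation*}
\sum_{b=1}^{m}\bigl((\bar{C}_b - \bar{C}_{b-1}) + 2\bigr) \Hquad = \Hquad \bar{C}_m + 2m \Hquad = \Hquad \bar{C}_j + 2m.
\end{equation*}
Since the block sizes $\bar{C}_b - \bar{C}_{b-1}$ are strictly positive integer multiples of $\tau$, each is at least $\tau$, so $m \le \bar{C}_j/\tau$. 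Because $\bar{C}_j$ is $C_j$ rounded up to the next multiple of $\tau$, we also have $\bar{C}_j \le C_j + \tau$. Chaining these estimates yields
\begin{equation*}
\mathrm{CBF}^\tau(C_j) \Hquad \le \Hquad \bar{C}_j + 2m \Hquad \le \Hquad \bar{C}_j\Bigl(1 + \tfrac{2}{\tau}\Bigr) \Hquad = \Hquad \bar{C}_j \cdot \frac{\tau+2}{\tau} \Hquad \le \Hquad (C_j + \tau)\cdot\frac{\tau+2}{\tau} \Hquad = \Hquad \frac{\tau+2}{\tau}C_j + (\tau+2),
\end{equation*}
which is the desired bound.

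The heavy lifting is done by Lemma \ref{lemma:cbflpviolation} and Kőnig's Theorem; the only subtlety in this lemma itself is the bookkeeping around the rounded deadlines that simultaneously trades off the multiplicative slack $(\tau+2)/\tau$ against the additive slack $\tau+2$. I expect that the main obstacle in strengthening the additive $\tau+2$ to the sharper $\tau/2 + 2.5 - 2/\tau$ claimed in Theorem \ref{thm:cbf} (which is beyond this lemma) will be a finer averaging: replacing the worst-case bound $\bar{C}_j \le C_j + \tau$ by an average over a randomized shift of the block grid, so that the expected excess is closer to $\tau/2$ rather than $\tau$.
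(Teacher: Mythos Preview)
Your proof is correct and follows essentially the same approach as the paper: round to multiples of $\tau$, invoke Lemma~\ref{lemma:cbflpviolation} for the $+2$ violation, schedule each block via K\H{o}nig's Theorem, and then combine the estimates $m \le \bar{C}_j/\tau$ and $\bar{C}_j \le C_j + \tau$ to reach $\frac{\tau+2}{\tau}C_j + \tau + 2$. Your anticipation of the averaging over a randomized shift of the block grid is also exactly what the paper does next to sharpen the additive constant.
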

\begin{proof}
The algorithm is given by setting the deadlines to the next integer multiple of $\tau$ and then doing the iterated rounding procedure. Given the edge to block assignments, a valid schedule can be obtained using Kőnigs Theorem (Theorem \ref{thm:koenig}).\\
Consider some fixed $C_j$ and let $k \in \mathbb{N}$ and $a \in [0,\tau)$ such that $C_j = k \cdot \tau + a$. Assume for now that $a > 0$, then $\bar{C}_j = (k+1) \tau$. By definition, the deadline $\bar{C}_j$ forms the $j$-th block. From Lemma \ref{lemma:cbflpviolation} it follows that each block's size increases by at most $2$, so the latest possible time at which coflow $j$ finishes is $(k+1) \tau + 2j$. As each block has size at least $\tau$, we have $j \le (k+1)$, so $(k+1)\tau + 2j \le (k+1)(\tau+2)$. Bounding this yields:
\begin{equation*}
    (k+1)(\tau+2) \Hquad\le\Hquad (\tau+2)k + \frac{\tau+2}{\tau}\cdot a + \tau + 2 \Hquad=\Hquad \frac{\tau+2}{\tau}C_j + \tau + 2
\end{equation*}
For $a = 0$, the argument simplifies. One has $\bar{C}_j = k\cdot\tau$ and hence a finishing time upper bound of $k(\tau+2)$. In this case a stronger bound of $\frac{\tau+2}{\tau}C_j$ follows.
\end{proof}

\subsubsection*{Reducing the Additive Constant}
The additive constant $\tau$ in Lemma \ref{lemma:cbfweak} assumes the worst case for each deadline, meaning that every deadline gets shifted from the very start to the very end of a block. We show that an averaging argument can be used to reduce the average amount of shift to $\tau/2 + \frac{1}{2} - \frac{2}{\tau}$. This requires that we show the bound across the weighted sum over all deadlines, unlike the previous proofs which established hard upper bounds for each individual deadline.\\

For $\lambda \in \mathbb{N}$, we consider a variant $\mathrm{CBF}^\tau_\lambda$ of $\mathrm{CBF}^\tau$ where an additional first block of fixed size $\lambda$ is inserted. This is equivalent to rounding the deadlines to the next larger term in the sequence $\{\lambda+i\cdot\tau\}_{i \in \mathbb{N}}$. Note that by simple modification of the arguments, the feasibility statements for \ref{lp:cbf} still apply. Lemma \ref{lemma:cbflpviolation} is thus also applicable.\\

This change to the deadline rounding step can change the finishing time of deadlines in our procedure. On the one hand, the delay of some deadlines might increase, as the last time slot of their respective blocks gets increased. On the other hand, the delay of some deadlines might decrease, as they now get included in an earlier block. We show the following.

\rsthmcbf*
\begin{proof}
The algorithm tries all $\lambda \in \{0,2,\dotsc,\tau-1,\tau+1\}$ and returns the solution with lowest total cost. We can upper bound this cost by instead considering a uniformly random $\lambda \in \{0,2,\dotsc,\tau-1,\tau+1\}$ and calculating the expected cost.
Consider some fixed $C_j$ and let $k \in \mathbb{N}, b \in \{0,1,\dotsc,\tau-1\}, a \in [0,1)$ such that $C_j = k\cdot \tau + b + a$. We assume for now that both $a\neq 0$ and $b \neq 0$. We write $\bar{C}_j(\lambda)$ to denote the smallest term in the sequence $\{\lambda+i\cdot\tau\}_{i \in \mathbb{N}}$ which is greater than or equal to $C_j$. \\
In case $\lambda = 0$, we have $\bar{C}_j(\lambda) = (k+1)\tau$ and by the same arguments as used in the proof of Lemma \ref{lemma:cbfweak} we obtain
\begin{equation*}
    \mathrm{CBF}^\tau_\lambda(C_j) \Hquad\le\Hquad (k+1)(\tau+2) \Hquad = \Hquad \frac{\tau+2}{\tau}C_j + \tau + 2 - \frac{\tau+2}{\tau}(a+b). 
\end{equation*}
In the case $\lambda \in \{2,\dotsc,b\}$, we have $\bar{C}_j(\lambda) = (k+1)\tau + \lambda$ and the index of $C_j$'s block increases by one. This gives:
\begin{equation*}
    \mathrm{CBF}^\tau_\lambda(C_j) \Hquad\le\Hquad (k+1)(\tau+2) + (\lambda+2) \Hquad=\Hquad \frac{\tau+2}{\tau}C_j + \tau + 2 - \frac{\tau+2}{\tau}(a+b) + (\lambda + 2) 
\end{equation*}
In the case $\lambda \in \{b+1,\dotsc, \tau-1\}$, we have $\bar{C}_j(\lambda) = k\tau + \lambda$ and the block index stays the same, thus:
\begin{equation*}
    \mathrm{CBF}^\tau_\lambda(C_j) \Hquad\le\Hquad k(\tau+2) + \lambda + 2 \Hquad=\Hquad \frac{\tau+2}{\tau}C_j + \tau + 2 - \frac{\tau+2}{\tau}(a+b) - (\tau -\lambda) 
\end{equation*}
In the case $\lambda = \tau+1$, as we assumed $a,b \neq 0$, we have $\bar{C}_j(\lambda) = (k+1)\tau + 1$ and the block index stays the same, so we have:
\begin{equation*}
    \mathrm{CBF}^\tau_\lambda(C_j) \Hquad\le\Hquad (k+1)(\tau+2) + 1 \Hquad=\Hquad \frac{\tau+2}{\tau}C_j + \tau + 2 - \frac{\tau+2}{\tau}(a+b) + 1
\end{equation*}
Every right hand side contains the same term $\hat{C}_j := \frac{\tau+2}{\tau}C_j + \tau + 2 - \frac{\tau+2}{\tau}(a+b)$, which is independent of $\lambda$, so we define a random variable $D_j(\lambda) := \mathrm{CBF}^\tau_\lambda(C_j) - \hat{C}_j$ which captures the respective remaining terms. Define $L_1 = \{2,\dotsc, b\}$ and $L_2 = \{b+1,\dotsc, \tau-1\}$. Then we have
\begin{align*}
\mathbb{E}[D_j] &\Hquad\le\Hquad \frac{b-1}{\tau}\cdot\mathbb{E}[2+\lambda\ |\ \lambda \in L_1] \Hquad - \Hquad \frac{\tau-1-b}{\tau}\cdot\mathbb{E}[\tau - \lambda\ |\ \lambda \in L_2] \Hquad+\Hquad \frac{1}{\tau}\cdot\mathbb{E}[1\ |\ \lambda \in \{\tau+1\}]\\
~   &\Hquad=\Hquad \frac{b-1}{\tau}\left(\frac{1}{b-1}\sum_{\lambda \in L_1} 2+\lambda\right) \Hquad-\Hquad \frac{\tau-1-b}{\tau}\left(\frac{1}{\tau-1-b}\sum_{\lambda \in L_2}\tau-\lambda\right) \Hquad + \Hquad \frac{1}{\tau}\\
    &\Hquad=\Hquad \frac{1}{\tau} \left(2(b-1) + \frac{b(b+1)}{2} - 1 - \frac{1}{2}(\tau-b)(\tau-1-b)\right) \Hquad + \Hquad \frac{1}{\tau}\\
    &\Hquad=\Hquad \frac{2b}{\tau} - \frac{\tau}{2} - \frac{2}{\tau} + \frac{1}{2} + b.
\end{align*}
So overall, we obtain
\begin{align*}
    \mathbb{E}[\mathrm{CBF}^\tau_\lambda(C_j)] &\Hquad\le\Hquad \hat{C}_j + \mathbb{E}[D_j]\\
    ~&\Hquad=\Hquad \frac{\tau+2}{\tau}C_j + \tau + 2 - \frac{\tau+2}{\tau}(a+b) + \left(\frac{2b}{\tau} - \frac{\tau}{2} - \frac{2}{\tau} + \frac{1}{2} + b\right)\\
    ~&\Hquad=\Hquad \frac{\tau+2}{\tau}C_j + \frac{\tau}{2} + 2.5 - \frac{2}{\tau} - (1+\tfrac{2}{\tau})a.
\end{align*}
As $a \in (0,1)$, the result follows. For $a=0$ or $b=0$, by checking all the cases one obtains that in each case, the bound stays the same or improves, so the theorem follows.\endproof
\end{proof}

\section{Asymptotic $2+\epsilon$ Approximation}\label{sec:asymp}
Theorem \ref{thm:coflowapprox} establishes that there is an algorithm returning a $3.415$-approximation for any given coflow input instance. In this section we show a stronger, asymptotically optimal, approximation for input instances with a certain structure. This result does not depend on the approximation framework but rather follows directly from the bounds established in Lemma \ref{lemma:cbfrelapprox}. Note that to show $(2-\epsilon)$-approximation hardness in \cite{Sachdeva2013}, they construct a sequence of instances for which the ratio between the sum over all weights and the optimum grows arbitrarily large, which shows that the asymptotic result in Theorem \ref{thm:coflowapproxlarge} is essentially optimal.

\rsthmcoflowapproxlarge*

\begin{proof}
For a given instance $\mathcal{I}$, using Lemma \ref{lemma:2approxstrong} we can obtain deadlines $C_1,\dotsc,C_n$ feasible for \ref{lp:main} for which the (weakened) bound $\sum_{j \in [n]}\omega_j C_j \Hquad \le \Hquad 2 \cdot \mathrm{OPT}$ holds.\\
Using the bound derived in the proof of Lemma \ref{lemma:cbfrelapprox}, we know that there exists algorithm $\mathrm{CBF}_R^\tau$ which can find a feasible coflow schedule for these deadlines such that for every $j \in [n]: \mathrm{CBF}_R^\tau(C_j) \le \tfrac{\tau+2}{\tau}C_j + 2\tau + 2$. Applying the algorithm to the deadlines yields:
\begin{align*}
\sum_{j \in [n]}\omega_j \cdot\mathrm{CBF}^\tau_R(C_j) &\Hquad\le\Hquad \sum_{j \in [n]}\omega_j \left( \frac{\tau+2}{\tau}C_j + 2\tau + 2 \right)\\
~&\Hquad=\Hquad \left(1 + \frac{2}{\tau}\right)\sum_{j \in [n]}\omega_j C_j \Hquad + \Hquad (2\tau + 2) \sum_{j \in [n]}\omega_j\\
~&\Hquad\le\Hquad \left(2 + \frac{4}{\tau} + \hat{\epsilon}(2\tau+2)\right) \mathrm{OPT}
\end{align*}
So for any $\epsilon$ by appropriate choice of $\tau$ large enough and respectively $\hat{\epsilon}$ small enough, the result follows.
\end{proof}

Note that while the requirements in Theorem \ref{thm:coflowapproxlarge} are rather technical, it implies several strong results for natural classes of coflow instances, such as instances where all coflows have large maximum degree.

\begin{corollary}
For any $\epsilon > 0$, there is $D \in \mathbb{N}$ such that there is a $(2+\epsilon)$-approximation algorithm for Coflow Scheduling without release dates for instances $\mathcal{I}$ fulfilling
\begin{equation*}
    \forall E_j \in \mathcal{I}:\quad \Delta(E_j) \Hquad\ge\Hquad D.
\end{equation*}
\end{corollary}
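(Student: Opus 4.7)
The plan is to reduce the corollary to Theorem \ref{thm:coflowapproxlarge} by showing that the degree hypothesis forces $\mathrm{OPT}(\mathcal{I})$ to dominate $\sum_{j} \omega_j$ in precisely the asymptotic regime required by that theorem.

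The key observation is that in any valid coflow schedule, the finishing time of coflow $E_j$ is bounded below by $\Delta(E_j)$. This is immediate from the matching constraint: if $v$ is a vertex realising the maximum degree within $E_j$, then at every time step at most one edge of $E_j$ incident to $v$ can be scheduled, so $\Delta(E_j)$ distinct time slots are needed just to schedule the edges of $E_j$ adjacent to $v$. In particular, for the optimal schedule we have $C^*_j \ge \Delta(E_j) \ge D$ for every $j$, and therefore
\begin{equation*}
\mathrm{OPT}(\mathcal{I}) \Hquad=\Hquad \sum_{j \in [n]} \omega_j C^*_j \Hquad\ge\Hquad D \sum_{j \in [n]} \omega_j,
\end{equation*}
which rearranges to $\sum_{j \in [n]} \omega_j \le \tfrac{1}{D}\mathrm{OPT}(\mathcal{I})$.

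Given $\epsilon > 0$, let $\hat\epsilon > 0$ be the constant supplied by Theorem \ref{thm:coflowapproxlarge}, and set $D := \lceil 1/\hat\epsilon \rceil$. The inequality above then yields $\sum_{j \in [n]} \omega_j \le \hat\epsilon \cdot \mathrm{OPT}(\mathcal{I})$, so the hypothesis of Theorem \ref{thm:coflowapproxlarge} is satisfied, and applying the algorithm from that theorem produces a $(2+\epsilon)$-approximation on $\mathcal{I}$.

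There is essentially no serious obstacle here; the only non-cosmetic step is the lower bound $C^*_j \ge \Delta(E_j)$, which follows directly from the matching property of each time slot and in fact is already implicit in Kőnig's Theorem as invoked earlier. The rest is a one-line arithmetic reduction to Theorem \ref{thm:coflowapproxlarge}.
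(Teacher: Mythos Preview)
Your proposal is correct and follows essentially the same approach as the paper: use the lower bound $C^*_j \ge \Delta(E_j) \ge D$ to obtain $\sum_j \omega_j \le \tfrac{1}{D}\,\mathrm{OPT}(\mathcal{I})$, then choose $D$ large enough (the paper phrases it as $\hat\epsilon \cdot D \ge 1$, you as $D = \lceil 1/\hat\epsilon\rceil$) so that Theorem~\ref{thm:coflowapproxlarge} applies.
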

\begin{proof}
In any schedule, the finishing time of a coflow is lower bounded by its maximum degree. Therefore the optimum cost will be at least $D$ times the sum of the weights. So for $D$ large enough such that $\hat{\epsilon} \cdot D \ge 1$, Theorem \ref{thm:coflowapproxlarge} yields the result.
\end{proof}


\printbibliography

\section{Appendix}\label{sec:app}
\subsection{Allocation LP Integrality and Complexity}\label{sec:app:lpintegrality}
\subsubsection*{Non-Integrality}
We provide a set of coflows and associated deadlines such that \ref{lp:main} contains a feasible fractional point but no integral point. The instance is defined on a vertex set $U \cup V$, where both $U$ and $V$ contain exactly $7$ vertices. It uses four coflows $E_1,E_2,E_3, E_4$ with associated deadlines $C_1 = 1, C_2 = 2, C_3 = 3, C_4 = 3$. The first three coflows act as a gadget construction which blocks certain vertices in $V$ from being used by edges in $E_4$. The edge sets of the gadget coflows are $E_1 = \{(6,2),(7,7)\}$,  $E_2 = \{(6,4),(7,5)\}$, $E_3 = \{(6,1),(7,3)\}$. Any valid integral schedule has to schedule $E_1$ in the first time slot, $E_2$ in the second time slot and $E_3$ in the third time slot, which essentially implies that coflow $E_4$ cannot use $V$ vertices $\{2\}, \{4,5\}, \{1,3\}$ in the first, second, and respectively third time slot. For easier argumentation we separate the edges of $E_4$ into two sets $E_4^1 = \{(1,1),(2,1),(3,3),(4,3),(1,4),(3,4),(2,5),(4,5)\}$ and $E_4^2 = \{(2,2),(3,2)\}$. There is a feasible half-integral allocation for this instance displayed in Figure \ref{fig:lpintegralitymatching}. 

\begin{figure}[H]
\centering
\includegraphics[width=15cm]{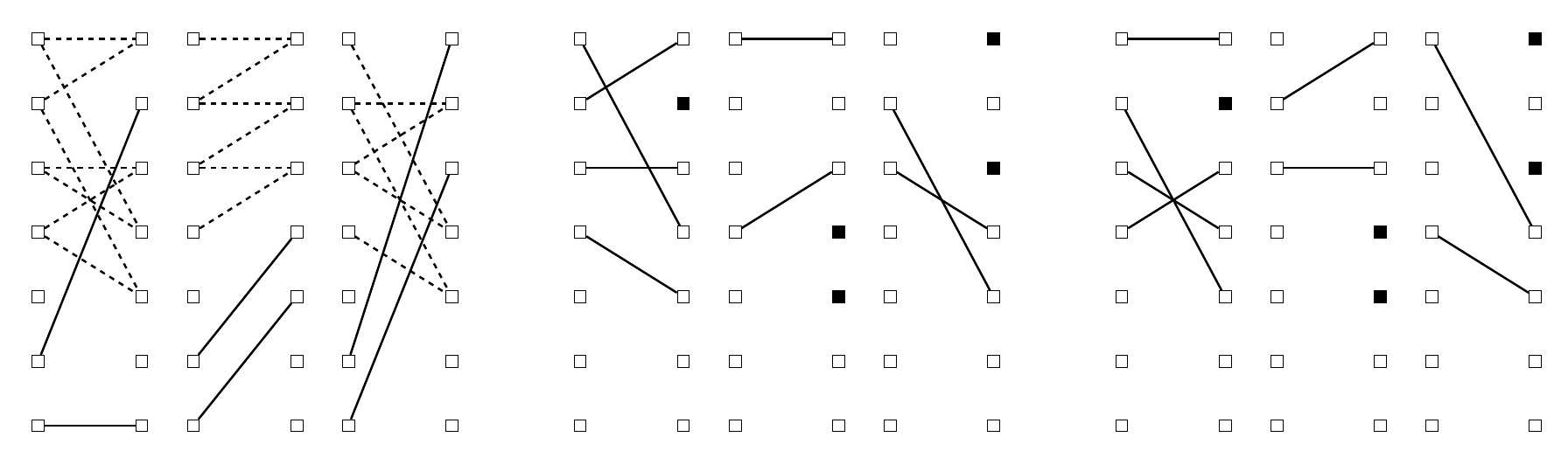}
\caption{The first picture shows a feasible fractional allocation, where dashed lines indicate values of $\tfrac{1}{2}$. The second and third picture show the two allocations for $E_1 \cup E_2 \cup E_3 \cup E_4^1$, with the relevant $V$ vertices which are blocked by gadget edges replaced by filled vertices for easier readability.}
\label{fig:lpintegralitymatching}
\end{figure}

By simple enumeration one can check that there are only two feasible integral allocations containing all edges in $E_1 \cup E_2 \cup E_3 \cup E_4^1$. The choice whether to schedule $(1,1)$ in the first slot and $(2,1)$ in the second slot or vice versa already fully determines a unique maximal matching in each case. However, for both of these allocations it is impossible to further include both edges in $E_4^2$, as vertex $2$ in $V$ is blocked in the first time slot and in both cases in either the second or the third time slot both vertices $2$ and $3$ are blocked in $U$. Therefore there is no integral matching containing all edges in $E_1 \cup E_2 \cup E_3 \cup E_4$.

\subsubsection*{$\mathbb{NP}$-Hardness}

For \ref{lp:main} it is possible to show that it is $\mathbb{NP}$-hard to decide whether an integral point exists. This is equivalent to showing that it is $\mathbb{NP}$-hard to determine whether a feasible integral schedule exists for some given coflow deadlines. The hardness holds even in a very restricted setting with just $3$ coflows and all deadlines in $\{1,2,3\}$.

\begin{lemma}\label{lemma:coflownphardness}
Given some bipartite multi-graph $G = (U \cup V, E)$, a disjoint partition of the edges $E = E_1 \cup \dotsc \cup E_n$ and deadlines $C_1,\dotsc,C_n \in \mathbb{N}$, it is $\mathbb{NP}$-complete to decide whether there exists a proper coloring $c: E \rightarrow \mathbb{N}$ such that for all $j \in [n], e \in E_j: c(e) \le d_j$.\\
This holds even for $n=3$ and $C_j \in \{1,2,3\}$.
\end{lemma}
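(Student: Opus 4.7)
Membership in $\mathbb{NP}$ is immediate: a proposed assignment $c \colon E \to \mathbb{N}$ is a polynomial-size certificate whose validity can be checked in polynomial time by verifying that the edges in every time slot form a matching and that $c(e) \le C_j$ holds for every $e \in E_j$.

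For the hardness direction, my plan is to reduce from the edge precoloring extension problem on bipartite graphs with three colors, which is known to be $\mathbb{NP}$-complete (Fiala, 2003). An instance of that problem is a bipartite multigraph $G$ of maximum degree at most $3$ together with a proper partial edge coloring $\phi$ using colors in $\{1,2,3\}$, and the task is to decide whether $\phi$ extends to a proper $3$-edge-coloring of $G$. The reduction will produce a coflow instance with exactly three coflows $E_1, E_2, E_3$ and deadlines $C_1=1, C_2=2, C_3=3$, designed so that feasible schedules correspond precisely to proper extensions of $\phi$. This yields the restricted hardness claimed in the last sentence of the lemma.

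The core technical step is the design of a color-$k$ \emph{stub gadget} that forces a specific slot $k$ on one designated interface edge while keeping all other slots at the adjacent original vertex free for remaining incident edges. For each precolored edge $e = (u, v)$ with $\phi(e) = k$, I would delete $e$ and attach one stub gadget at each of $u$ and $v$. A stub at $u$ introduces a fresh anchor vertex $w$ joined to $u$ by an interface edge placed in $E_k$, and for every $\ell \in \{1, \ldots, k-1\}$ a fresh vertex $x_\ell$ joined to $w$ by an edge placed in $E_\ell$. Uncolored edges of $G$ are kept and placed in $E_3$. Since all fresh vertices are private to their gadget, bipartiteness is preserved (the anchor $w$ sits in the opposite partition class of $u$, each $x_\ell$ in the same class as $u$) and the gadgets cannot interfere with one another or with the rest of $G$. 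A short induction on $\ell$ shows that $(w, x_1)$ must take slot $1$ (its deadline is $1$), which forces $(w, x_2) \in E_2$ into slot $2$ at $w$, and so on up to $(w, x_{k-1})$ into slot $k-1$; the interface edge $(u, w) \in E_k$ then has slot $k$ as its only remaining option at $w$, which is exactly the behavior we want at $u$.

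Correctness then follows from a natural two-way translation. Given a proper extension $c$ of $\phi$, a feasible coflow schedule is produced by placing each uncolored edge at slot $c(e)$ and filling every stub with its forced slots. Given a feasible schedule, the fixed slot $k$ on every interface edge together with the slots of uncolored edges yield a proper extension, because at each original vertex the multiset of slots used by the coflow edges equals the multiset of colors used at that vertex in the associated coloring, so the matching condition translates to properness. The construction is clearly polynomial. I expect the main obstacle to be making the forcing lemma for the stub gadgets fully rigorous and checking for each of the three cases $k=1,2,3$ that no unintended matching conflicts arise at the original vertices or between different gadgets, both of which amount to short case analyses.
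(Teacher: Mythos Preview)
Your proposal is correct. The forcing gadget you design is exactly the mechanism the paper uses as well: a chain of edges with deadlines $1,2,\dotsc,k$ incident to a fresh anchor vertex so that the last edge is pinned to slot $k$. The essential difference is the source problem. The paper reduces from Restricted Time Table Design (Even, Itai, Shamir 1975): it starts from a bipartite graph on $P \cup C$ with all ``requirement'' edges placed in $E_3$, and for each $h_k \notin P_j$ attaches a gadget at $P_j$ that consumes slot $k$, thereby encoding the availability constraint $(1)$ of RTTD. You instead reduce from bipartite edge precoloring extension with three colors (Fiala 2003), replacing each precolored edge by two stubs that occupy the prescribed slot at both endpoints. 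Both reductions land on the same restricted target ($n=3$, deadlines $\{1,2,3\}$) and rely on the same gadget primitive; your route is arguably more direct, since precoloring extension is already phrased as an edge-coloring problem and the correspondence with coflow deadlines is immediate, whereas the paper has to argue the equivalence between RTTD feasibility and deadline-respecting colorings. Either reference is a valid anchor for the hardness.
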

\begin{proof}
We restate the proof found in \cite{schnaars_21}, it is closely inspired by work in \cite{bonuccelli_01}.\\
Containment in $\mathbb{NP}$ is clear, as checking whether the deadlines are fulfilled can be done in linear time. We reduce from a problem called Restricted Time Table Design (RTTD), which is known to be $\mathbb{NP}$-complete \cite{even_1975}.
\begin{lemma}[RTTD]\label{lemma:rttd}
Given sets $H:= \{h_1,h_2,h_3\}, P := \{P_1,...,P_n\}$, where each $P_j \subseteq H$, $C := [n]$, and a matrix $R \in \{0,1\}^{n \times n}$, it is $\mathbb{NP}$-hard to  decide whether there exists a function $f: P \times C \times H \rightarrow \{0,1\}$ such that:
\begin{itemize}
    \item[(1)] $f(P_j,c,h_k) = 1 \Hquad \Rightarrow \Hquad h_k \in P_j$
    \item[(2)] $\sum_{k=1}^3 f(P_j,c,h_k) = r_{j,c} \quad \forall j,c\in [n]$
    \item[(3)] $\sum_{j=1}^n f(P_j,c,h_k) \le 1 \quad \forall c \in [n], k \in [3]$
    \item[(4)] $\sum_{c=1}^n f(P_j,c,h_k) \le 1 \quad \forall j \in [n], k \in [3]$
\end{itemize}
\end{lemma}
RTTD can be seen as the problem of finding a set of matchings fulfilling certain constraints. We describe a reduction from an instance of RTTD to the graph coloring problem. We start with the graph $(P\cup C, E)$, where $E = \{(P_j,c) \ |\ \forall j,c \in [n]:\ r_{j,c} = 1\}$. Define a partition $E = E_1 \cup E_2 \cup E_3$, where for $i \in [3]: C_i := i$. Initially $E_1$ and $E_2$ are empty and $E_3$ contains all edges in $E$. Note that from a partition of $E_3$ into three matchings, which is equivalent to a valid $3$-edge coloring of $E_3$, one obtains a canonical function $f$ which fulfills conditions $(2)-(4)$. To ensure that $(1)$ is also fulfilled, for each $\hat{P} \in P$ we add certain edges. In case $h_1 \not\in \hat{P}$, for some new vertex $\hat{c} \in C$, we add the edge $(\hat{P},\hat{c})$ to $E_1$. In case $h_2 \not\in \hat{P}$, creating new vertices when necessary, we add $(\hat{v},\hat{c})$ to $E_1$ and $\hat{P},\hat{c}$ to $E_2$. In a similar manner, in the case $h_3 \not\in \hat{P}$, we add $(\hat{v},\hat{c})$ to $E_1$, $(\hat{v},\hat{c})$ to $E_2$ and $(\hat{P}, \hat{c})$ to $E_3$. In all cases, these additional edges can only be colored in a unique way, which causes the respective vertex $\hat{P}$ to be unavailable for other edges, thus enforcing constraint $(1)$.\\
Using this construction and considering the canonical mapping between edge colorings and decompositions into matchings, one obtains that a valid edge coloring respecting the deadlines in the graph is equivalent to a function $f$ in the sense of Lemma \ref{lemma:rttd}, implying the $\mathbb{NP}$-hardness of the problem.
\end{proof}

\subsection{Coflow Deadline Linear Program}\label{sec:app:coflowdeadline}
This section gives a short overview of the linear programming approach used to obtain deadlines for the coflows. All of the constructions and results are due to \cite{fukunaga22} and \cite{im19}.
\subsubsection*{Linear Program}
We use the linear program \ref{lp:deadlines} as given in \cite{fukunaga22}, adapted to our notation. For this purpose, let $E = \bigcup_{j \in [n]}E_j$ be the set of all edges and let $T = \max_{j \in [n]}r_j + 2\cdot \Delta(E)$ be an upper bound on the number of required time slots. Remember that we use the term flow to refer to an edge together with some possibly fractional multiplicity. The LP has one variable $x_{t,e}$ for each $t \in [T]$ and $e \in E$, which models the processing of flow $e$ during time step $t$. Additionally, for each coflow $j \in [n]$, there is a finishing time variable $c_j$.

\begin{gather*}\label{lp:deadlines}\tag{LP D}
\begin{aligned}
\min\qquad &&\sum_{j \in [n]} \omega_jc_j  &~&&~\\
s.t.\qquad &&\sum_{t \in [T]}t\cdot x_{t,e} &\quad\le c_j \qquad&\forall j \in [n], \forall e \in E_j&& (1)\\
~&&\sum_{e \in \delta_E(v)}x_{t,e} &\quad\le 1\qquad &\forall t \in [T],\forall v \in V && (2)\\
~&&\sum_{t \in [T]} x_{t,e} &\quad= 1\qquad &\forall j \in [n], \forall e \in E_j && (3)\\
~&&\phantom{\sum_{x \in [T]}}x_{t,e} &\quad =  0\qquad &\forall j \in [n], \forall e \in E_j, \forall t \in [r_j] && (4)\\
~&&x_{t,e} &\quad\ge 0\qquad &\forall t \in [T], \forall e \in E && (5)
\end{aligned}
\end{gather*}
\vspace{0.2cm}\\

Constraint (1) models that the completion time of each coflow is at least the time spent on scheduling each edge in said coflow. Constraint (2) models the matching constraints. Constraint (3) models that each edge of all coflows has to be fully scheduled.\\
Any valid solution for a coflow instance corresponds to a feasible point inside the polytope, so clearly we have $\mathrm{Cost}(\mathrm{LP}) \le \mathrm{OPT}$, where $\mathrm{OPT}$ is the optimal cost of the coflow instance. Note that any feasible solution to the LP forms a sequence of fractional matchings in the underlying graph. Instead of as a discrete fractional assignments in each time slot, we can also view them as continuous assignments during intervals. So if some edge $e$ during time $t$ has $x_{t,e} > 0$ flow assigned, we can view this as a continuous scheduling of $x_{t,e}$ flow amount during time $[t,t+1)$. In the same manner, given such a continuous assignment, by discretizing into unit length intervals, one can obtain a discrete fractional assignment. This view is both used in the coming rounding argument but also useful in general to analyze and understand the problem structure. 

\subsubsection*{Obtaining Coflow Deadlines}
The finishing time variables $c_j$ capture the notion of time spent processing each edge, but they do not correspond cleanly to deadlines for each coflow. If we for example set $C_j = \max\{t \in [T]: \exists e \in E_j: x_{e,t} > 0\}$, then the weighted sum over these deadlines might far exceed the cost of the LP, as it is possible that due to the fractionality of the matchings, some fraction of an edge is scheduled very late, even though the majority of the coflow is scheduled much earlier. To obtain deadlines for which there are strong approximation guarantees, we use a randomized rounding procedure as employed by \cite{fukunaga22,im19} and others.
For this purpose, for $j \in [n]$ and $\theta \in [0,1]$ let $C_j(\theta)$ be the smallest point in time at which all flows $e \in E_j$ have completed by at least a $\theta$ fraction in the continuous view of the fractional assignment. This means that we view each flow as being continuously assigned during the respective time slots and we determine the smallest point in time at which this continuous schedule reaches $\theta$ completion for all respective flows. The authors in the aforementioned works obtain deadlines by randomly selecting $\theta$ according to the probability distribution $f(x)=2x$ and setting $C'_j := \lceil C_j(\theta)/\theta\rceil$. As we do not require integrality in the next steps of our algorithm, we set $C_j := C_j(\theta)/\theta$, leaving out the rounding step.\\
It can be shown that the following holds for the rounded deadlines:

\begin{lemma}[\cite{im19}]\label{lemma:2approxold}
There is a polynomial time randomized algorithm determining deadlines $C'_1,\dotsc,C'_n$ for which \ref{lp:main} is feasible and for which for all $j \in [n]$:
\begin{equation*}
\mathbb{E}[C'_j] \ \le\  2c_j    
\end{equation*}
\end{lemma}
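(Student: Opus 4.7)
The plan is to analyze the randomized rounding procedure of Im et al.: solve \ref{lp:deadlines} to obtain a fractional schedule $\{x_{t,e}\}$ with finishing-time variables $c_j$, sample $\theta\in(0,1]$ from the density $f(\theta)=2\theta$, and set $C'_j:=\lceil C_j(\theta)/\theta\rceil$. Two claims need to be established: (a) \ref{lp:main} is feasible for the resulting deadlines, and (b) $\mathbb{E}[C'_j]\le 2c_j$ for every $j$.

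For (a), I would construct an explicit feasible point of \ref{lp:main} by rescaling the \ref{lp:deadlines} schedule in the time axis. Reorder coflows so that $C_1(\theta)\le\dots\le C_n(\theta)$, inducing the same order on $C'_j$. For each block $s$ and edge $e\in E_j$ with $j\ge s$, set
\[
y_{s,e}\;:=\;\frac{1}{\theta}\sum_{t\in(C_{s-1}(\theta),\,C_s(\theta)]}x_{t,e},
\]
truncating excess mass so that constraint (I) holds with equality. Constraint (I) is achievable since every $e\in E_j$ is at least $\theta$-completed by time $C_j(\theta)$ in \ref{lp:deadlines}, so scaling by $1/\theta$ yields mass at least $1$ within blocks $1,\dots,j$. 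Constraint (III) holds for the same reason. Constraint (II) holds since each vertex of \ref{lp:deadlines} has load at most $C_s(\theta)-C_{s-1}(\theta)$ on the interval $(C_{s-1}(\theta),C_s(\theta)]$, so after scaling by $1/\theta$ this becomes $(C_s(\theta)-C_{s-1}(\theta))/\theta$, at most $C'_s-C'_{s-1}$ by definition of the deadlines.

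For (b), the central identity is the single-edge density-weighted bound. Viewing \ref{lp:deadlines}'s schedule continuously with slot $t$ corresponding to the interval $(t-1,t]$, the layer-cake representation gives
\[
\int_0^1 C_e(\theta)\,d\theta\;=\;\sum_t t\cdot x_{t,e}-\tfrac{1}{2}\;\le\;c_j-\tfrac{1}{2}
\]
by LP constraint (1). The density $f(\theta)=2\theta$ is chosen precisely so that
\[
\mathbb{E}_\theta\!\left[C_e(\theta)/\theta\right]\;=\;\int_0^1 \frac{C_e(\theta)}{\theta}\cdot 2\theta\,d\theta\;=\;2\int_0^1 C_e(\theta)\,d\theta\;\le\;2c_j-1,
\]
and the pointwise bound $\lceil X\rceil\le X+1$ contributes at most one additional unit in expectation, yielding the per-edge bound $\mathbb{E}[\lceil C_e(\theta)/\theta\rceil]\le 2c_j$. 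To lift this to $C_j(\theta)=\max_{e\in E_j} C_e(\theta)$, I would use a structural argument combining constraints (1) and (2) of \ref{lp:deadlines}: the vertex-capacity constraint prevents the processing of different edges in $E_j$ from interleaving too adversarially, so that $\int_0^1 \max_{e\in E_j} C_e(\theta)\,d\theta\le c_j-\tfrac{1}{2}$, and the earlier chain of inequalities carries through verbatim.

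The main obstacle is this last step: extending the per-edge bound to the maximum over $E_j$. A naive union bound would lose a factor $|E_j|$, and in general $\int\max\ne\max\int$. The fix, implicit in Im et al.'s analysis, is to exploit the joint LP structure — the mean-completion-time constraint (1) together with the vertex-capacity constraint (2) — to argue that the maximum over $E_j$ is essentially controlled by the single edge achieving $\max_e \sum_t t\,x_{t,e}$, i.e., exactly the quantity bounded by $c_j$, after which the density $2\theta$ and the ceiling bookkeeping yield $\mathbb{E}[C'_j]\le 2c_j$.
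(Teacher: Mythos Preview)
The paper does not give its own proof of this lemma; it is stated with attribution to \cite{im19} and used as a black box. The only related argument in the paper is the proof of Lemma~\ref{lemma:2approxstrong}, which concerns the weighted-sum bound and itself invokes an identity from \cite{im19} (their Lemma~6) without proof: $\sum_j \omega_j \int_0^1 C_j(\theta)\,d\theta = \mathrm{Cost}(\mathrm{LP}) - \tfrac12\sum_j\omega_j$.

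Your feasibility argument for \ref{lp:main} (the stretching construction) and your per-edge expectation computation are both correct and match the paper's informal description.

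The gap you flag is real, but your proposed fix does not close it. Invoking constraints~(1) and~(2) of \ref{lp:deadlines} is not enough: there are LP-\emph{feasible} points where two edges $e_1,e_2\in E_j$ with equal mean completion time interleave so that $\int_0^1 \max(C_{e_1}(\theta),C_{e_2}(\theta))\,d\theta$ strictly exceeds $c_j-\tfrac12$ (e.g.\ $e_1$ split half in slot~$1$ and half in slot~$3$, $e_2$ entirely in slot~$2$, the two edges sharing a vertex so that~(2) is respected). The vertex-capacity constraint does not prevent this. What is actually needed is LP \emph{optimality}, not mere feasibility. One clean way to use it: for every coflow with $\omega_j>0$ the optimum sets $c_j=\max_{e\in E_j}\sum_t t\,x_{t,e}$, so the per-edge layer-cake identity gives the lower bound $\int_0^1 C_j(\theta)\,d\theta \ge \max_{e\in E_j}\int_0^1 C_e(\theta)\,d\theta = c_j-\tfrac12$. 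Combined with the aggregate equality from \cite{im19} quoted above (which says the $\omega$-weighted sum of the left-hand sides equals the $\omega$-weighted sum of the right-hand sides), every term with $\omega_j>0$ must meet its lower bound with equality, i.e.\ $\int_0^1 C_j(\theta)\,d\theta = c_j-\tfrac12$. After that your density-$2\theta$ and ceiling bookkeeping go through verbatim to give $\mathbb{E}[C'_j]\le 2c_j$.
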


\noindent Slightly modifying their proof to account for not rounding the deadlines, we obtain:

\rslemmatwoapproxstrong*

\begin{proof}
In their Lemma 6, the authors show:
\begin{equation*}
    \sum_{j \in [n]}\omega_j \int_0^1 C_j(\theta)\  d \theta \Hquad = \Hquad \mathrm{Cost}(\mathrm{LP}) - \frac{1}{2}\sum_{j \in [n]}\omega_j 
\end{equation*}
By slight modification of the proof of their Lemma 9, we have:
\begin{align*}
    \mathbb{E}[\sum_{j \in [n]}\omega_j C_j] \Hquad&=\Hquad \mathbb{E}\Big[\sum_{j \in [n]} \frac{1}{\theta}C_j(\theta)\Big]\\
    ~ &=\Hquad 2 \sum_{j \in [n]} \omega_j \int_{\theta = 0}^1 C_j(\theta)\ d\theta\\
    ~ &=\Hquad 2\Big(\mathrm{Cost}(\mathrm{LP}) - \frac{1}{2}\sum_{j \in [n]}\omega_j\Big)\\
    ~ &\le\Hquad 2 \cdot \mathrm{OPT} - \sum_{j \in [n]}\omega_j \qedhere
\end{align*}

\end{proof}

Note that this procedure can be de-randomized to obtain a fully deterministic algorithm, for details see \cite{im19}.\\

The choice of $\theta$ and subsequent setting of $C_j := C_j(\theta)/\theta$ can be viewed as stretching the continuous schedule obtained for \ref{lp:deadlines}. This means that if some amount of flow $x_{e,t}$ is scheduled during time $[t,t+1)$, after the stretching the same amount of flow is scheduled during $[\frac{t}{\theta},\frac{t+1}{\theta})$. To not exceed the flow requirements, the schedule is cut off when the total amount of scheduled flow reaches the requirement. By definition of $C_j(\theta)$ and the matching constraints in $\ref{lp:deadlines}$, in the stretched schedule the matching constraints are still fulfilled and every coflow $E_j$ finishes by time $C_j$. By partitioning the continuous schedule into intervals between consecutive deadlines, the connection to \ref{lp:main} becomes clear. The maximum allocation constraints are fulfilled as the stretched schedule fulfills the matching constraints and by definition of $C_j$, for every edge enough flow is allocated before the deadline. This argument highlights that we can explicitly construct a feasible point for which \ref{lp:main} is feasible and the cost guarantees from Lemma \ref{lemma:2approxstrong} hold, by performing the stretching operation on the schedule obtained for \ref{lp:deadlines} and then discretizing the assignment with respect to each block.

\subsection{Coflow Scheduling with Release Dates}\label{sec:app:frameworkrelease}
In this section we show how the scheduling framework can be extended to work for the case with release dates. Theorem \ref{thm:coflowapproxrelease} gives an extension of the guarantee provided by Theorem \ref{thm:coflowapprox} to the case with release dates, though this comes at the cost of a worse approximation ratio.

\rsthmcoflowapproxrelease*

The overall proof structure is very similar to the case without release dates, just with tweaks at every step to account for the additional constraints. We provide the general outline here and omit some minor details which follow from modifications to the original arguments.

\subsubsection*{Coflow Deadlines}

Like in the case of no release dates, we want to obtain deadlines for the coflows which obey some structural constraints. \ref{lp:main} does not contain release dates, but with some minor modifications we obtain a suitable LP. For this purpose, for some $\kappa \in [2n]$, define a sequence $D_1 \le D_2 \le \dotsc \le D_\kappa$ containing exactly all deadlines and release dates. For some edge $e \in E$, let $r(e)$ be the index of the release date associated to $e$ in the chain of points in $D$ and respectively $d(e)$ the index of the deadline. Then we construct the following LP.

\begin{gather*}\tag{LP $R$}\label{lp:mainrel}
\begin{aligned}
\sum_{s \in [\kappa]}x_{s,e} &&=& \quad 1 &&\forall e \in E\\
\sum_{e: v \in e} x_{s,e} &&\le& \quad D_s - D_{s-1}\quad &&\forall s \in [\kappa],\forall v \in V\\
x_{s,e} &&=& \quad 0 &&\forall j \in [n], \forall e \in E_j, \forall s \not\in \{r(e)+1,\dotsc,d(e)\} \quad\\
x_{s,e} &&\ge&\quad 0
\end{aligned}
\end{gather*}

The structure of \ref{lp:mainrel} is very similar to \ref{lp:main}, just with added block separators for each release date and modification of the constraints to prevent edges from being scheduled in blocks before their respective release dates.\\

The same procedure by \cite{im19} used in Section \ref{sec:coflowdead} can be employed to obtain deadlines $C_1,\dotsc,C_n$ for which \ref{lp:mainrel} is feasible and for which the same cost bound from Lemma \ref{lemma:2approxstrong} holds.

\subsubsection*{Edge Allocation}

Given such a set of deadlines for which \ref{lp:mainrel} is feasible, we again describe two algorithms $\mathrm{Greedy}_R$ and $\mathrm{CBF}^\tau_R$ which provide feasible allocations for all edges. Their guarantees are slightly worse due to the added release date constraints.\\
Like previously, $\mathrm{Greedy}_R(C_j)$ and $\mathrm{CBF}^\tau_R(C_j)$ will be used to denote the finishing time of coflow $E_j$ in the schedule provided by the respective algorithm.

\begin{lemma}
For given deadlines $C_1,\dotsc,C_n$ for which \ref{lp:mainrel} is feasible there is an algorithm $\mathrm{Greedy}_R$ returning a valid coflow schedule such that the following holds for all  $j \in [n]$.
\begin{equation*}
\mathrm{Greedy}_R(C_j, r_j) \quad \le \quad r_j + 2C_j-1
\end{equation*}
\end{lemma}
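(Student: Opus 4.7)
The plan is to adapt the proof of Lemma \ref{lemma:greedy} with only a shift by $r_j$ to account for the release dates. First I would define $\mathrm{Greedy}_R$ analogously to $\mathrm{Greedy}$: process the coflows in order of non-decreasing deadlines $C_1\le C_2\le\dots\le C_n$, and for each edge of coflow $E_j$ assign it, in a work-conserving manner, to the earliest time slot $t\ge r_j+1$ at which both of its endpoints are still free. The resulting schedule is feasible by construction (it is a proper matching in each time slot and no edge is placed before its coflow's release date).

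Next I would fix $j\in[n]$, take an edge $e=(u,v)\in E_j$ on which coflow $j$ finishes under $\mathrm{Greedy}_R$, and bound its scheduling time by counting which slots of $[r_j+1,\infty)$ can possibly block it. A slot is blocked iff some other edge incident to $u$ or to $v$ from a coflow processed earlier (including other edges of $E_j$) has already been assigned to that slot. The first step is to bound the number of such potentially blocking edges, using feasibility of \ref{lp:mainrel}. Summing the vertex constraint (II) over all blocks $s=1,\dots,d(C_j)$ gives
\begin{equation*}
\sum_{s=1}^{d(C_j)}\sum_{e': u\in e'} x_{s,e'} \;\le\; \sum_{s=1}^{d(C_j)}(D_s-D_{s-1}) \;=\; C_j,
\end{equation*}
and for every $e'\in\bigcup_{i\le j}E_i$ the equality constraint (I) together with the zero-constraint (III) yield $\sum_{s=1}^{d(C_j)}x_{s,e'}=1$. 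Hence at most $C_j$ edges in $\bigcup_{i\le j}E_i$ are incident to $u$; excluding $e$ itself, at most $\lfloor C_j-1\rfloor$ remain. The same argument applies to $v$.

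Thus at most $2\lfloor C_j-1\rfloor\le 2C_j-2$ edges can ever block $e$, and each of them occupies exactly one time slot. In the worst case, all of them sit in distinct slots of $[r_j+1,\infty)$, so at most $2C_j-2$ slots $\ge r_j+1$ are unavailable for $e$. Therefore $e$ is placed no later than slot $r_j+1+(2C_j-2)=r_j+2C_j-1$, which proves the claim.

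The only subtlety, rather than a real obstacle, is to verify that the degree-counting argument still goes through with the finer block structure of \ref{lp:mainrel}, whose blocks are separated by both release dates and deadlines; this is handled by telescoping $\sum_{s\le d(C_j)}(D_s-D_{s-1})=C_j$. Blocking edges could of course be assigned by $\mathrm{Greedy}_R$ to slots $\le r_j$ (because they stem from coflows with earlier release dates), which only helps us; the worst case is that all of them land in $[r_j+1,\infty)$, yielding the tight bound $r_j+2C_j-1$.
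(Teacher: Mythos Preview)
Your proposal is correct and follows essentially the same approach as the paper's proof, which merely states that the argument is identical to Lemma~\ref{lemma:greedy} on a shifted interval. You simply spell out in more detail the telescoping and degree-counting step for \ref{lp:mainrel} that the paper leaves implicit.
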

\begin{proof}
The proof is essentially identical to the one of Lemma \ref{lemma:greedy}, just on a shifted interval. No edge of $E_j$ can be scheduled before $r_j$. For the following $2\lceil C_j \rceil$ time slots the same vertex allocation argument applies, which leads to an upper bound of $r_j + 2C_j - 1$.
\end{proof}

\noindent For the allocation procedure $\mathrm{CBF}^\tau_R$, the guarantee worsens by an additive $\tau+2$. 

\begin{lemma}\label{lemma:cbfrelapprox}
For given deadlines $C_1,\dotsc,C_n$ for which \ref{lp:mainrel} is feasible, weights $\omega_1,\dotsc,\omega_n$, and a parameter $\tau \in \mathbb{N}_{\ge 2}$, there is an algorithm $\mathrm{CBF}_R^\tau$ returning a valid coflow schedule such that the following holds.
\begin{equation*}
   \sum_{j \in [n]} \omega_j \cdot \mathrm{CBF}_R^\tau(C_j) \quad \le \quad \sum_{j \in [n]}\omega_j \left(\frac{\tau+2}{\tau}C_j + \frac{3}{2}\tau + 4.5 - \frac{2}{\tau}\right)
\end{equation*}
\end{lemma}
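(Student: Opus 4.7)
The plan is to mirror the $\mathrm{CBF}^\tau$ construction from Section \ref{sec:beckfiala}, working with \ref{lp:mainrel} in place of \ref{lp:main}. First I would round each release date down and each deadline up to the nearest integer multiple of $\tau$, so that every block boundary becomes a multiple of $\tau$. Feasibility of the rounded LP follows from the same monotonicity argument as in the release-date-free case: deadlines rounded up relax the zero-constraints on the late side, and release dates rounded down relax them on the early side, so any feasible fractional solution for the original boundaries remains feasible after rounding.

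Next I would apply the iterated LP rounding procedure from Section \ref{sec:beckfiala} essentially unchanged. The constraint-counting step goes through verbatim because the release-date equalities $x_{s,e}=0$ simply fix variables, reducing both the variable count and the constraint count equally and leaving the strict inequality $|\mathrm{Cons}|<|\mathrm{Vars}|$ intact. Lemma \ref{lemma:cbflpviolation} therefore applies: the returned integral point violates each block capacity $(II)$ by at most $2$, and schedules within blocks are produced via Kőnig's theorem (Theorem \ref{thm:koenig}).

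The only genuinely new contribution comes from release dates. Since each release date was rounded down by up to $\tau-1$, an edge may be assigned to a block starting strictly before its actual release date, in which case the schedule must leave those early slots idle for that coflow. Combined with the standard $+2$ block-capacity violation, this introduces an additional per-coflow additive delay of at most $\tau+2$ compared to the bound used in Theorem \ref{thm:cbf}. Plugging this shifted per-coflow bound into the $\lambda$-averaging argument over $\lambda\in\{0,2,\dots,\tau-1,\tau+1\}$ from the proof of Theorem \ref{thm:cbf} then propagates the extra $\tau+2$ additively through the identical calculation, producing the stated bound $\sum_j \omega_j\!\left(\tfrac{\tau+2}{\tau}C_j+\tfrac{3\tau}{2}+4.5-\tfrac{2}{\tau}\right)$.

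The main obstacle, I expect, is arguing that the release-date idle contribution really is a single additive $\tau+2$ per coflow rather than something that compounds across consecutive blocks. This should follow from a careful local argument: any $\tau-1$ idle gap introduced at a release-date boundary gets absorbed into the slack of the next block (because block sizes are multiples of $\tau$ while the per-block violation is only $2$), so only the block immediately associated with coflow $j$'s own release date incurs an unabsorbed penalty against $j$'s finishing time.
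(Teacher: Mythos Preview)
Your rounding goes in the opposite direction from the paper for release dates, and this is where the real difficulty lies. The paper rounds each release date \emph{up} to the next multiple of $\tau$ and each deadline up to the \emph{second}-next multiple (i.e.\ an extra $+\tau$). Rounding release dates up can destroy LP feasibility (e.g.\ $r_j=k\tau+1$ and $C_j=(k+1)\tau-\varepsilon$ would collapse to the same boundary), which is why the extra $\tau$ on deadlines is needed; feasibility is then recovered by viewing the original fractional solution continuously and shifting it forward by $\tau$. The payoff is that the integral assignment produced by the iterated rounding already respects the true release dates, so Lemma~\ref{lemma:cbflpviolation} and K\H{o}nig's theorem yield a valid schedule with no post-processing. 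The additive $\tau+2$ you are aiming for comes cleanly from the one extra block introduced by the $+\tau$ on deadlines.

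Your route---round release dates down---keeps LP feasibility trivially but leaves you with an integral assignment that may place edges in a block beginning before their true release date, and your ``absorption'' argument for repairing this is not correct as stated. There is no slack in the next block: after iterated rounding every block is filled to capacity plus~$2$, so an idle gap cannot be absorbed there. What one can show instead is an inductive bound: if every edge in block~$b$ has true release date strictly less than $B_{b-1}+\tau$ (which follows from rounding down), then scheduling block~$b$ starting at $\max(A_{b-1}+1,\,B_{b-1}+\tau)$ gives $A_b\le E_b+\tau-1$ for all $b$, where $E_b$ is the end time ignoring release dates. This is a global shift bound, not a per-coflow absorption, and it is not the argument you sketched. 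Your proposal therefore has a genuine gap at exactly the step you flagged as the obstacle; the paper sidesteps it entirely by choosing the other rounding direction.
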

\begin{proof}
The overall algorithm is almost identical to the one for the case of no release dates from Section \ref{sec:beckfiala}. The main change is a different initial rounding.\\

Note that in order to obtain good guarantees, we need to ensure that the number of resulting blocks after rounding is not too large and that we have control over the blocks' sizes. We therefore have to round both release dates and deadlines. Simply rounding both to the next multiple of $\tau$ would not suffice, as it could lead to infeasible LPs. For example, if for some $k\in \mathbb{N}$, $r_j = k \cdot \tau + 1$ and $C_j = (k+1)\cdot \tau-\epsilon$, then rounding them in that way would lead to $r_j = C_j = (k+1)\cdot \tau$. We instead round the release dates up to the next multiple of $\tau$ and the deadlines to the second next multiple of $\tau$, meaning that we round them to the next multiple and then add an additional $\tau$.\\

In the case without release dates, it is not hard to show that rounding up the deadlines cannot make the resulting LP infeasible, as the feasible region only increases. In the present case, a bit more care is needed, as the rounding of the release dates could lead to parts of the feasible region becoming infeasible. A feasible point for the original LP can be transformed to one in this LP by interpreting the assignment as a time-continuous one and essentially shifting the allocation by $\tau$. This means that if an edge was scheduled at some point in time $t$, we now treat it as if it was scheduled at time $t+\tau$. More details about these transformations and interpretations can be found in \cite{fukunaga22}.\\

Given the feasibility of the LP for the rounded release dates and deadlines, the same iterated rounding approach from Section \ref{sec:beckfiala} can be used to obtain an integral feasible schedule which violates the respective degree bound constraints by at most $2$. As the procedure never changes variables as soon as they are integral and as the blocks's sizes increasing only increases assigned time slots, the feasibility for the release date constraints is preserved.

For some given deadline $C_j$, let $k \in \mathbb{N}$ and $a \in [0,\tau)$ such that $C_j = k\cdot \tau + a$. Then $C_j$ gets rounded to $(k+2)\cdot \tau$. There are at most $k+2$ blocks up to and including the block formed by $C_j$, whose sizes all increase by at most two. This yields the following bound.

\begin{equation*}
    \mathrm{CBF}^\tau_R(C_j) \Hquad\le\Hquad (k+2)\tau + (k+2)2 \Hquad=\Hquad (k+2)(\tau+2) \Hquad\le\Hquad \frac{\tau+2}{\tau}C_j + 2\tau + 4
\end{equation*}
Like in the case of no release dates, this bound is slightly weaker than as stated in the lemma. Using the same averaging strategy as employed before reduces the additive constant by $\tau/2 - \tfrac{1}{2} + \frac{2}{\tau}$, leading to the result. 
\end{proof}

\subsubsection*{Framework and Approximation Bound}
The algorithm for Coflow Scheduling with release dates again works by obtaining deadlines and then running several edge allocation algorithms on these and returning the cheapest solution among them. To bound the cost, a framework very similar to the one described in Lemma \ref{lemma:appbound} is used, though an additional bound on the distance to the optimum cost is needed due to the presence of the $r_j$ summand in the guarantee provided by $\mathrm{Greedy}_R$. As in any optimal solution the finishing time $\mathrm{OPT}_j$ of coflow $E_j$ has to be after $r_j$, we have $r_j \le \mathrm{OPT}_j - 1$.\\

In the following lemma, like in Lemma \ref{lemma:appbound}, let $f_1,\dotsc,f_k$ be some functions capturing the edge allocation guarantees provided by some collection of algorithms $\mathrm{ALG}_1,\dotsc,\mathrm{ALG}_k$. In this case the functions additionally depend on a parameter $r_j \in \mathbb{R}_{\ge 0}$, which like in the case for $\mathrm{Greedy}$ captures the dependency on release dates.

\begin{lemma}\label{lemma:appboundrelease}
Let $\lambda_1,\dotsc,\lambda_k \ge 0$ with $\sum_{i \in [k]}\lambda_i = 1$ and $a,b \in \mathbb{R}$. If for all possible pairs $x \ge 1, r_x \in [0,x-1]$
\begin{equation*}
    \sum_{i \in [k]}\lambda_i f_i(x, r_x) \quad \le \quad a(x+1) + b(r_x+1),
\end{equation*}
then for all coflow instances $\mathcal{I}$:
\begin{equation*}
    C_{\mathrm{ALG}}(\mathcal{I}) \Hquad=\Hquad \min\{C_{\mathrm{ALG}_1}(\mathcal{I}),\dotsc,C_{\mathrm{ALG}_k}(\mathcal{I})\} \Hquad\le\Hquad (2a + b) \cdot \mathrm{OPT}(\mathcal{I})
\end{equation*}
\end{lemma}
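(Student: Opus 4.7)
The plan is to mirror the structure of the proof of Lemma \ref{lemma:appbound}, augmenting the two-term hypothesis with the new observation that $r_j$ is always bounded by the optimum finishing time of coflow $E_j$. First, I would define $g(x,r_x) := \sum_{i\in[k]} \lambda_i f_i(x,r_x)$ and introduce a randomized algorithm $\mathrm{RALG}$ which executes $\mathrm{ALG}_i$ with probability $\lambda_i$. The min-versus-convex-combination inequality $C_{\mathrm{ALG}} = \min_i C_{\mathrm{ALG}_i} \le \sum_i \lambda_i C_{\mathrm{ALG}_i} = \mathbb{E}[C_{\mathrm{RALG}}]$ carries over verbatim, and using linearity of expectation together with the bounds $\mathrm{ALG}_i(C_j, r_j) \le f_i(C_j, r_j)$ we obtain
\begin{equation*}
    C_{\mathrm{ALG}}(\mathcal{I}) \Hquad\le\Hquad \sum_{j \in [n]} \omega_j \, g(C_j, r_j).
\end{equation*}

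Next, I would apply the hypothesis of the lemma pointwise. For every $j \in [n]$, the deadlines supplied by Lemma \ref{lemma:2approxstrong} (in its release-date version, where $C_j \ge r_j + 1 \ge 1$ so that $r_j \in [0, C_j - 1]$) satisfy the required range constraint, hence $g(C_j, r_j) \le a(C_j + 1) + b(r_j + 1)$. Summing with weights yields
\begin{equation*}
    C_{\mathrm{ALG}}(\mathcal{I}) \Hquad\le\Hquad a\sum_{j \in [n]}\omega_j C_j \,+\, a\sum_{j \in [n]}\omega_j \,+\, b\sum_{j \in [n]}\omega_j r_j \,+\, b\sum_{j \in [n]}\omega_j.
\end{equation*}

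The two remaining inputs are the deadline-cost bound from (the release-date analogue of) Lemma \ref{lemma:2approxstrong}, namely $\sum_j \omega_j C_j \le 2\cdot\mathrm{OPT} - \sum_j \omega_j$, and the trivial release-date bound $r_j \le \mathrm{OPT}_j - 1$ noted just above the lemma statement, which gives $\sum_j \omega_j r_j \le \mathrm{OPT} - \sum_j \omega_j$. Plugging in, the $\pm \sum_j \omega_j$ terms cancel cleanly and one is left with $C_{\mathrm{ALG}}(\mathcal{I}) \le 2a\cdot\mathrm{OPT} + b\cdot\mathrm{OPT} = (2a+b)\mathrm{OPT}$, which is the conclusion.

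I do not expect any real obstacle here: the argument is an arithmetic extension of Lemma \ref{lemma:appbound}, with the only new ingredient being the inequality $r_j \le \mathrm{OPT}_j - 1$ which is immediate from the definition of release dates. The one place that warrants a brief check is that Lemma \ref{lemma:2approxstrong} does hold for the release-date setting with \ref{lp:mainrel} in place of \ref{lp:main}; this is stated in Section \ref{sec:app:frameworkrelease} and can be invoked directly.
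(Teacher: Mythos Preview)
Your proposal is correct and follows essentially the same route as the paper's proof: define $g$, bound the minimum by the convex combination (equivalently $\mathbb{E}[C_{\mathrm{RALG}}]$), apply the hypothesis pointwise, and then substitute the deadline bound from Lemma~\ref{lemma:2approxstrong} together with $r_j \le \mathrm{OPT}_j - 1$ to obtain $(2a+b)\cdot\mathrm{OPT}$. The paper's write-up is slightly terser but the logic is identical.
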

\begin{proof}
Let $g(x) = \sum_{i \in [k]}\lambda_i f_i(x,r_x)$. Using the exact same argument as in the proof of Lemma \ref{lemma:appbound} and inserting the upper bound on $g$, we arrive at
\begin{equation*}
    C_{ALG} \quad \le \quad \sum_{j \in [n]}\omega_j g(C_j) \quad \le \quad a \sum_{j \in [n]}\omega_j C_j \ +\  a\sum_{j \in [n]}\omega_j \ +\  b \sum_{j \in [n]}\omega_j(r_j+1)
\end{equation*}
Inserting $\sum_{j \in [n]}\omega_j C_j \quad \le \quad 2 \cdot \mathrm{OPT} - \sum_{j \in [n]}\omega_j$ for the first sum and $r_j \le \mathrm{OPT}_j-1$ into the second sum we obtain
\begin{equation*}
C_{\mathrm{ALG}} \quad \le \quad (2a+b) \cdot \mathrm{OPT} \qedhere
\end{equation*}
\end{proof}
\noindent We can now apply this modified framework to show Theorem \ref{thm:coflowapproxrelease}.

\rsthmcoflowapproxrelease*
\begin{proof}
We use algorithms $\mathrm{Greedy}_R$ and $\mathrm{CBF}_R^4$. For their edge allocation guarantees we have
\begin{equation*}
    f_G(C_j, r_j) \Hquad\le\Hquad r_j + 2C_j - 1\qquad \text{and} \qquad f_{\mathrm{CBF}}(C_j, r_j) \Hquad\le\Hquad \frac{3}{2}C_j + 10.
\end{equation*}
For $\lambda_1 = 0.68$ and $\lambda_2 = 0.32$ we obtain
\begin{equation*}
    \lambda_1 f_G(x, r_x) + \lambda_2 f_{\mathrm{CBF}}(x, r_x) \quad\le\quad 1.84 (x+1) \Hquad+\Hquad 0.68(r_j+1),
\end{equation*}
which by application of Lemma \ref{lemma:appboundrelease} gives a $4.36$-approximation.
\end{proof}

\subsection{High Edge Multiplicities}\label{sec:app:coflowpseudopoly}

In the main body of this work we have assumed that in each coflow, each edge is given explicitly, meaning that multiple copies of said edge are included if the respective flow demand is greater than $1$. In this chapter we show how to extend the algorithms to the setting where instead each edge $e\in E$ has some associated flow requirement $p_e \in \mathbb{N}_+$. To ensure polynomial runtime, we need dependency on $\mathcal{O}(\log(p_e))$ instead of $\mathcal{O}(p_e)$. We individually highlight the changes required in each step. Extending the algorithms to this setting increases the cost by a multiplicative factor of $(1+\epsilon)$, for an arbitrarily small $\epsilon > 0$.

\subsubsection*{Coflow Deadlines}

Obtaining coflow deadlines in this modified setting requires some additional care, as the procedure uses a time-indexed LP, which could potentially require a super-polynomial amount of time slots. We adapt the procedure described in \cite{im19} to our case. The basic idea is to change the time-indices to interval indices, for a polynomially sized set of geometrically increasing intervals. We focus on the case without release dates, the case with them follows analogously.\\

Let $T$ be some upper bound on the number of required time slots, for example the sum over all multiplicities. For some $\epsilon > 0$, define the set of points $\{\lfloor (1+\epsilon)^i\rfloor\}_{i \in [\lceil \log_{1+\epsilon}T\rceil]}$. For some $K \in \mathbb{N}$, let these timepoints be $1 = t_1 \le t_2 \le \dotsc \le t_K$. For convenience, set $t_0 = 0$. The modified LP is as follows.

\begin{gather*}\label{lp:deadlinespoly}\tag{LP D'}
\begin{aligned}
\min\qquad &&\sum_{j \in [n]} \omega_jc_j  &~&&~\\
s.t.\qquad &&\sum_{i \in [K]}(t_{i+1}-1)\cdot x_{i,e} &\quad\le c_j \qquad&\forall j \in [n], \forall e \in E_j&& (1)\\
~&&\sum_{e \in \delta_E(v)}x_{t,e} &\quad\le t_i - t_{i-1}\qquad &\forall i \in [K],\forall v \in V && (2)\\
~&&\sum_{i \in [K]} x_{i,e} &\quad= p_e\qquad &\forall j \in [n], \forall e \in E_j && (3)\\
~&&x_{i,e} &\quad\ge 0\qquad &\forall i \in [K], \forall e \in E && (4)
\end{aligned}
\end{gather*}

In this LP, the timepoints are replaced by time intervals of size $t_{i+1} - t_i$. The right hand sides of constraints $(2)$ and $(3)$ are respectively adjusted to account for this. If an interval contains some amount of flow $x_{i,e}$, it is assumed that this is scheduled instantaneously at the latest possible time point $t_{i+1}-1$ in the interval, which gives the term in $(1)$. This leads to an overestimation of the cost by a factor of at most $(1+\epsilon)$. The rounding procedure to determine deadlines $C_1,\dotsc,C_n$ and the associated results for LP feasiblity remain unchanged.

\subsubsection*{Greedy Flow Allocation}

For the greedy algorithm, we cannot simply do one iteration for each edge and multiplicity, as this could require super-polynomially many steps. The required adaption to the procedure was described in a similar form in \cite{zhen_2015}.\\

We create $n$ sets of edges iteratively, one associated to each coflow. Assume that $E_1,\dotsc,E_n$ are the sets of edges ordered without loss of generality ascendingly by the deadlines returned in the first step. The first set initially only contains the edges in $E_1$. When creating the $j$-th set, we iterate over all edges $e \in E_j$. If there exists some set with index $i < j$ such that $e$ could be added to the set without increasing its maximum vertex degree, we add a copy $\hat{e}$ of $e$ to this set with as much flow demand $p_{\hat{e}}$ as possible without increasing the maximum degree and remove this flow demand from $p_e$. After doing this procedure for all coflows and edges, we have $n$ sets of edges, some of which might be empty. Each of these blocks can now be scheduled consecutively in polynomial time using Theorem \ref{thm:koenig} (Kőnig's Theorem). It is easy to check that for each edge the greedy property is fulfilled, meaning it cannot be scheduled earlier without changing other allocation. Therefore the bounds guaranteed by Lemma \ref{lemma:greedy} hold. In the case of release dates, the sets have to be subdivided to prevent crossing any release date and during the flow shifting only sets have to be considered for which the lowest possible time slot is larger than the respective release date.

\subsubsection*{Coflow Beck-Fiala}

To obtain a polynomial algorithm for the $\mathrm{CBF}^\tau$ allocation procedure we only need to do a minor modification to \ref{lp:cbf}. Initially, in constraint $(I)$, we change the right hand side to $p_e$ instead of $1$. After having obtained a solution to this modified LP, for every edge and every block, fix the integral part of the respective flow assigned to this block. Replace each edge $e$ by $\hat{p}_e$ copies and change the right hand side to $1$ again, where $\hat{p}_e$ is the sum over the remaining fractional assignments. As there are at most $2n$ blocks, this amount is upper bounded by $2n$, hence strongly polynomial in the input size. Solve the remaining instance with the unmodified iterated rounding process.

\subsection{LP Integrality Gap}\label{sec:app:lpintgap}
In his work, using a non-constructive result from hypergraph matching theory, Fukunaga \cite{fukunaga22} shows that the integrality gap of \ref{lp:deadlines} is at most $4$, even in the case of release dates. By using the framework from Lemma \ref{lemma:appbound} on a slightly refined version of the non-constructive hypergraph result together with one of the edge allocation algorithms from this work, we establish a stronger bound.

\begin{lemma}\label{lemma:lpintgap}
The integrality gap of \ref{lp:deadlines} is at most $\frac{109}{28}(<3.893)$.
\end{lemma}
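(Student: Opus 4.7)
The plan is to mirror the proof of Theorem \ref{thm:coflowapprox}, but to substitute one of the constructive edge-allocation procedures with the non-constructive hypergraph-matching argument used by Fukunaga \cite{fukunaga22} to prove his bound of $4$. The observation that makes this legal is that Lemma \ref{lemma:appbound} never uses that the $\mathrm{ALG}_i$ are polynomial-time algorithms: its proof only compares weighted expected costs against a deadline-delay function, so it remains valid when some (or all) of the ``algorithms'' in the combination are mere existence statements. This is exactly the regime in which an integrality-gap bound lives.

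First I would take an optimal solution to \ref{lp:deadlines} and apply the LP rounding of Section \ref{sec:app:coflowdeadline} to obtain deadlines $C_1,\dotsc,C_n$ for which \ref{lp:mainrel} is feasible and for which the argument behind Lemma \ref{lemma:2approxstrong} gives $\sum_j \omega_j C_j \le 2\,\mathrm{Cost}(\mathrm{LP}) - \sum_j \omega_j$. The right-hand side is expressed directly in terms of the LP value rather than $\mathrm{OPT}$, as is necessary for an integrality-gap argument. Next I would restate Fukunaga's hypergraph-matching theorem in the form required by the framework: given such deadlines, there exists a valid schedule in which each coflow $E_j$ finishes by $f_H(C_j)$. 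Fukunaga's original statement corresponds to $f_H(x) = 2x$, and a slight refinement of his analysis (tracking, for instance, an additive slack of $-1$ in the spirit of Lemma \ref{lemma:greedy}, or exploiting integrality at the boundaries of the constructed matchings) is what will allow a gain over the $4$ bound.

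I would then pair this existence bound with the constructive $\mathrm{CBF}_R^\tau$ from Lemma \ref{lemma:cbfrelapprox}, whose weighted deadline-delay function has the form $\tfrac{\tau+2}{\tau}x + \tfrac{3}{2}\tau + 4.5 - \tfrac{2}{\tau}$. Invoking Lemma \ref{lemma:appbound} with convex combination coefficients $\lambda_1,\lambda_2 \ge 0$ summing to $1$ and any $\alpha$ with $\lambda_1 f_H(x) + \lambda_2 f_{\mathrm{CBF}_R}(x) \le \alpha(x+1)$ for all $x \ge 1$ then yields the existence of a schedule whose total cost is at most $2\alpha \cdot \mathrm{Cost}(\mathrm{LP})$, which is exactly an integrality-gap bound of $2\alpha$. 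A routine optimization over $\tau$ and $(\lambda_1,\lambda_2)$, entirely analogous to the computation in the proof of Theorem \ref{thm:coflowapprox} (where $\tau=6$ and $\lambda_1=23/41$, $\lambda_2=18/41$ were the minimizers), should produce $2\alpha = \frac{109}{28}$.

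The main obstacle is in the second step: extracting a strong enough refinement of Fukunaga's non-constructive hypergraph-matching bound so that the framework actually beats $4$. This is the only novel ingredient — the remaining steps (LP rounding, the $\mathrm{CBF}^\tau$ allocation guarantee, and the combining framework) are plug-and-play from the main body of the paper. A mild secondary issue is that the argument must go through uniformly in the presence of release dates, so the refined $f_H$ must be established for the release-date LP \ref{lp:mainrel} rather than only for \ref{lp:main}; this should follow by the same shift-by-$\tau$ trick used to pass from Theorem \ref{thm:cbf} to Lemma \ref{lemma:cbfrelapprox}.
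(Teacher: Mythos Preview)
Your proposal is essentially the paper's approach: combine a refined non-constructive hypergraph-matching bound (Lemma \ref{lemma:hypergraphexist}) with $\mathrm{CBF}_R^\tau$ inside the framework of Lemma \ref{lemma:appbound}, and read off an integrality-gap bound of $2\alpha$ against $\mathrm{Cost}(\mathrm{LP})$.

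Two small corrections are worth noting. First, since the deadlines $C_j$ produced by the rounding are fractional, Fukunaga's hypergraph construction (which needs integer time slots) natively gives $2\lceil C_j\rceil \le 2C_j+2$, not $f_H(x)=2x$; the paper's refinement is not an ``additive $-1$'' but rather a saving of one unit in this ceiling loss, obtained by omitting the second hyperedge copy at the last time slot whenever the fractional part of $C_j$ is at most $\tfrac12$, yielding $f_H(x)=2x+1$. Second, the optimal parameters are $\tau=5$ (not $6$) together with $\lambda_1=\tfrac{51}{56}$, $\lambda_2=\tfrac{5}{56}$, which give $\lambda_1(2x+1)+\lambda_2(\tfrac{7}{5}x+11.6)=\tfrac{109}{56}(x+1)$ and hence $2\alpha=\tfrac{109}{28}$.
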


\noindent To show this, we use the following bound which is obtained by refining a result from \cite{fukunaga22}.

\begin{lemma}\label{lemma:hypergraphexist}
Given deadlines $C_1,\dotsc,C_n$ and release dates $r_1,\dotsc,r_n$ for which \ref{lp:mainrel} is feasible, there exists a valid coflow schedule such that for the finishing times $C^*_j$ the following holds.
\begin{equation*}
    C^*_j \quad \le \quad 2C_j + 1
\end{equation*}
\end{lemma}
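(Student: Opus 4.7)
The statement is a sharpening of a non-constructive hypergraph-matching bound used by Fukunaga~\cite{fukunaga22} (who proved $C^*_j \le 2C_j + 2$). My plan would be to follow the same overall route but tighten the additive loss by one, using a more careful accounting of where slack is spent.

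The starting point is a feasible fractional solution $x$ to \ref{lp:mainrel}. Split time according to the sequence $D_0 < D_1 < \dotsc < D_\kappa$ of release dates and deadlines. Constraint $(II)$ tells us that, restricted to any block $s$, the $x$-values form a weighted bipartite (multi)graph in which every vertex has fractional degree at most $D_s - D_{s-1}$. Rounding each deadline up to the next integer only enlarges the feasible region of \ref{lp:mainrel}, so we may work with integer block boundaries from now on; the rounding itself costs at most an additive $1$ in any $C_j$, which is where the "$+1$" in the statement ultimately comes from.

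The core step is to produce an integer block-respecting schedule of \emph{all} edges in which, for each block $s$, every vertex is incident to at most $2(D_s - D_{s-1})$ scheduled edges. Having such an assignment, Kőnig's theorem (Theorem~\ref{thm:koenig}) converts each block's integer multigraph into a proper edge coloring using exactly that many time slots, so coflow $E_j$ finishes by $2 D_{d(j)}$, which after accounting for the integer rounding of deadlines yields $C^*_j \le 2C_j + 1$. Existence of the doubled integer assignment is precisely what hypergraph matching theory (via Haxell-type / Aharoni--Berger independent-transversal results) buys us: given a fractional assignment that saturates vertices to degree $D_s - D_{s-1}$, one can extract an integer one that exceeds this by only a small constant. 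Fukunaga applies this with a slack of $2$ in each block, giving $2C_j + 2$; my refinement is to apply it with slack $1$ in the terminal block of each coflow while keeping the slack of $2$ for earlier blocks, which suffices because coflow $j$ only needs to be \emph{completed} by the end of block $d(j)$, not to leave a buffer beyond it.

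I expect the main obstacle to be verifying that the non-constructive existence theorem still applies under the reduced slack budget in the terminal block. This will come down to checking a degree/fractional-cover condition of the form required by the hypergraph matching theorem, using that the LP solution already supplies a fractional witness of exactly the right total weight. A secondary technical point is to handle release dates cleanly: the constraint $x_{s,e} = 0$ for $s \le r(e)$ in \ref{lp:mainrel} guarantees that any integer refinement of $x$ automatically respects the release dates, so no extra work is required there beyond ensuring that the block decomposition uses the release dates as separators, which is already built into the LP.
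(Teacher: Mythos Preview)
Your accounting of where the ``$+1$'' comes from is wrong, and this is the crux of the matter. You claim that rounding each $C_j$ up to $\lceil C_j\rceil$ ``costs at most an additive $1$ in any $C_j$, which is where the $+1$ in the statement ultimately comes from.'' But the finishing-time bound you derive is $2D_{d(j)}=2\lceil C_j\rceil$, so the rounding loss gets \emph{doubled}: one only has $2\lceil C_j\rceil<2C_j+2$, and for $C_j$ with fractional part below $\tfrac12$ (e.g.\ $C_j=1.1$, giving $2\lceil C_j\rceil=4>3.2=2C_j+1$) the target $2C_j+1$ is genuinely violated. So your basic construction reproduces Fukunaga's $2C_j+2$, not $2C_j+1$.

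Your proposed refinement --- ``slack $1$ in the terminal block of each coflow, slack $2$ for earlier blocks'' --- does not repair this. The block $d(j)$ that is terminal for coflow $j$ is an intermediate block for every coflow with a later deadline, and the Aharoni--Haxell condition must be verified for a \emph{single} fixed hypergraph; you cannot assign different capacities to the same block depending on which coflow you are currently bounding. The paper's actual refinement is quite different and hinges on a case split on the fractional part of $C_j$. If the fractional part exceeds $\tfrac12$ then already $2\lceil C_j\rceil\le 2C_j+1$ and nothing further is needed. If the fractional part is at most $\tfrac12$, then for each edge $e\in E_j$ one \emph{deletes} the second hyperedge $e^2_t$ at the single slot $t=\lceil C_j\rceil$; the Aharoni--Haxell hypothesis still holds because the fractional load of coflows with deadline $\le C_j$ in slot $\lceil C_j\rceil$ is at most $\tfrac12$ at every vertex, so it can be doubled and absorbed entirely into the first copy. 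This per-edge, per-slot deletion is the idea you are missing; it is compatible across coflows precisely because it only touches hyperedges indexed by $e\in E_j$, not the shared block capacity.
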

\begin{proof}

We briefly review the technique used by \cite{fukunaga22} before explaining how to achieve the refined result. For an elaborate proof see \cite{fukunaga22}. Their proof uses the following result from hypergraph matching theory by Aharoni and Haxell.

\begin{lemma}[\cite{Aharoni2000}]\label{lemma:hypergraphmatching}
    If an $r$-uniform bipartite hypergraph $\mathcal{H}$ with the node set bipartition $(A,B)$ satisfies $v(H_X) > (r-1)(|X|-1)$ for any $X \subseteq A$, then $H$ has a perfect matching.
\end{lemma}

A hypergraph is called $r$-uniform if every edge contains exactly $r$ vertices and is called bipartite if there is a bipartition $(A,B)$ of the vertices such that for all edges $e: |e \cap A| = 1$. Here $H_X$ is the induced edge set in $B$ for some given set of vertices $X \subseteq A$, meaning that $H_X := \{e \setminus A\ |\ e \in E , e \cap X \neq \emptyset\}$ and $v(E)$ is the maximum size of a matching in a given set of edges $E$.\\

Given the deadlines $C_1,...,C_n$ and release dates $r_1,...,r_n$, we construct a $3$-uniform bipartite hypergraph for which we can show existence of a perfect matching using Lemma \ref{lemma:hypergraphmatching}. This perfect matching can be directly translated to a feasible solution of the Coflow Scheduling instance. For this purpose, let $T$ be an upper bound on the number of required time slots and for every $t \in [T]$ let $V^1_t$ and $V^2_t$ be copies of the node set of the coflow instance. For $v \in V, t\in [T],k \in [2]$, let $v_t^k$ be the respective corresponding vertex in $V_t^k$. For every $e \in E$, let $a_e$ be a vertex. The bipartition is given by $A := \{a_e\ |\ e \in E\}$ and $B:= \cup_{t \in [T], k \in [2]}V^k_t$. For every $j \in [n], e=(v,u) \in E_j, t \in \{r_j+1,...,\lceil C_j \rceil\}$ we define the two hyperedges $e_t^1 := \{a_e, v_t^1, u_t^1\}$ and $e_t^2 := \{a_e, v_t^2, u_t^2\}$.\\

Given some $X \subseteq A$, the set $H_X$ corresponds to disjoint pairs of copies of the underlying graph for all time slots in which the respective edges belonging to vertices in $X$ are scheduled. As this is a bipartite graph and as the deadlines and release dates were feasible for \ref{lp:mainrel}, there is a matching of size $2|X|$ in this graph. This follows by interpreting the fractional matching in the polytope as a fractional matching in both of the vertex sets. Hence the requirements of Lemma \ref{lemma:hypergraphmatching} are fulfilled, so there is a perfect matching in the hypergraph. Such a matching directly corresponds to an assignment of edges to time slots, where the two sets $V_t^1$ and $V_t^2$ correspond to each time slot essentially being doubled, so being replaced by two consecutive time slots. In the resulting schedule, for every $E_j$, each of its edges is scheduled by time $2\lceil C_j \rceil \le 2C_j +2$. \\

With a small modification, this result can be improved by $1$. For the largest value of $t$ in $\{r_j+1,...,\lceil C_j \rceil\}$, if the fractional part of $C_j$ is less than or equal to $\frac{1}{2}$, we do not include the hyperedge $e_t^2$. In this case, by considering the canonical discretization of the edge allocation from the point in \ref{lp:mainrel}, from coflows with deadlines less than or equal to $C_j$ there is at most a total assignment of $\frac{1}{2}$ adjacent to the vertices in the time slot $\lceil C_j \rceil$. Hence this assignment can be doubled and assigned to the respective first hyperedge, with remaining allocation from other coflows possibly being pushed to the respective second hyperedge. So in this case the finishing time bound improves by $1$, while for fractional part greater than $\frac{1}{2}$, we have $2\lceil C_j \rceil \le 2C_j +1$.
\end{proof}

\noindent Combining this with the guarantees obtained from Lemma \ref{lemma:cbfrelapprox}, Lemma \ref{lemma:lpintgap} can be shown.
\begin{proof}[Proof of Lemma \ref{lemma:lpintgap}]
Note that the framework from Lemma \ref{lemma:appbound} only requires guarantees for the edge allocations of the given algorithms and is thus still applicable even if some of them might be non-constructive. The conclusion then changes to existence of a solution fulfilling the given cost bound, rather than a constructive algorithm, but this is sufficient to establish a bound on the integrality gap. We use a combination of the guarantees provided for the non-constructive result from Lemma \ref{lemma:hypergraphexist} and for $\mathrm{CBF}_R^5$ from Lemma \ref{lemma:cbfrelapprox}. Let $f_H$ and $f_{CBF}$ be functions capturing the respective guarantees in the sense required by Lemma \ref{lemma:appbound}. We have 
\begin{equation*}
    f_H(x) \Hquad = \Hquad 2x + 1 \quad \text{and} \quad f_{\mathrm{CBF}}(x) \Hquad = \Hquad \frac{7}{5}x + 11.6.
\end{equation*}
Let $\lambda_1 = \frac{51}{56}$ and $\lambda_2 = \frac{5}{56}$. Then we obtain
\begin{equation*}
\lambda_1 f_H(x) + \lambda_2f_{\mathrm{CBF}}(x) \Hquad=\Hquad (2 \cdot \tfrac{51}{56} + \tfrac{7}{5}\cdot \tfrac{5}{56})x + (\tfrac{51}{56} + 11.6 \cdot \tfrac{5}{56}) \Hquad=\Hquad \tfrac{109}{56}(x+1).
\end{equation*}
So the requirements of Lemma \ref{lemma:appbound} are fulfilled with $\alpha = \frac{109}{56}$ and we thus obtain that there exists a valid coflow schedule with cost at most $2 \cdot \frac{109}{56} = \frac{109}{28} (< 3.893)$, which implies that the integrality gap of LP \ref{lp:deadlines} is at most this value.
\end{proof}

\subsection{Approximation Improvements}\label{sec:app:approximprov}

Using the framework from Lemma \ref{lemma:appbound} together with a new edge allocation function, we can achieve a slight improvement upon the $3.415$-approximation from Theorem \ref{thm:coflowapprox}.\\

\begin{lemma}\label{label:app:koenigcbf}
Given deadlines $C_1,\dotsc,C_n$ for which \ref{lp:cbf} is feasible, weights $\omega_1,...,\omega_n$ and a parameter $\tau \in \mathbb{N}_{\ge 2}$ and a parameter $b \in \mathbb{N}_{\ge 1}$, there is an algorithm $\mathrm{CKBF}^{(\tau,b)}$ returning a valid coflow schedule such that the following holds.
\begin{equation*}
    \sum_{j \in [n]}\omega_j \cdot\mathrm{CKBF}^{(\tau,b)}(C_j) \Hquad\le\Hquad b \cdot \sum_{j: C_j < b+1} \omega_j \Hquad + \Hquad \sum_{j: C_j \ge b+1}\omega_j\left(\frac{\tau+2}{\tau}C_j + \frac{\tau}{2} + 2.5 + b - \frac{2}{\tau}\right)
\end{equation*}
\end{lemma}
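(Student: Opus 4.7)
The name $\mathrm{CKBF}$ suggests combining K\H{o}nig's theorem with the Beck--Fiala-style CBF procedure. I would design the algorithm $\mathrm{CKBF}^{(\tau,b)}$ so that it reserves the first $b$ time slots exclusively for coflows with small deadlines ($C_j < b+1$), scheduling them via K\H{o}nig's theorem, and then runs $\mathrm{CBF}^\tau$ on the remaining coflows starting from time slot $b+1$. The overall bound is then obtained by combining the two contributions.

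\textbf{Step 1 (Small coflows via K\H{o}nig).} Let $J_S := \{ j : C_j < b+1 \}$ and assume deadlines are sorted so that $J_S = \{1,\dotsc,j_{\max}\}$. Summing constraint $(II)$ of the feasible LP over blocks $s \in [j_{\max}]$ for any fixed vertex $v$ telescopes on the right hand side to $C_{j_{\max}} < b+1$. On the left hand side, since edges of small coflows must (by constraint $(III)$) place their full weight in blocks $\le j_{\max}$, each such edge incident to $v$ contributes a full unit. Hence the multi-graph $\bigcup_{j \in J_S} E_j$ has maximum degree strictly less than $b+1$, and because degrees are integers, at most $b$. By Theorem~\ref{thm:koenig} these edges can be colored with $b$ colors, giving a schedule of the small coflows within the first $b$ time slots. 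Every small coflow therefore finishes by time $b$.

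\textbf{Step 2 (Large coflows via $\mathrm{CBF}^\tau$ shifted by $b$).} Let $J_L := \{ j : C_j \ge b+1 \}$. Restricting any feasible point of \ref{lp:cbf} to the variables corresponding to edges of $J_L$ yields a feasible solution of the analogous LP for just these coflows (constraint $(II)$ only gets slacker when edges are removed). Hence $\mathrm{CBF}^\tau$ is applicable to the large coflows alone, and Theorem~\ref{thm:cbf} gives
\begin{equation*}
\sum_{j \in J_L} \omega_j \cdot \mathrm{CBF}^\tau(C_j) \;\le\; \sum_{j \in J_L} \omega_j\left(\frac{\tau+2}{\tau} C_j + \frac{\tau}{2} + 2.5 - \frac{2}{\tau}\right).
\end{equation*}
I would then shift this entire schedule by $b$ time slots so it occupies slots $b+1, b+2, \dotsc$; this does not interfere with the K\H{o}nig schedule of Step~1, and each large coflow finishes $b$ time slots later than it would have, i.e.\ by $\mathrm{CBF}^\tau(C_j) + b$.

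\textbf{Step 3 (Combine).} Adding $b$ to each large-coflow bound and adding the small-coflow contribution $b \cdot \sum_{j \in J_S} \omega_j$ yields exactly the inequality stated in the lemma.

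The only slightly subtle step is Step 1: one must correctly use the LP feasibility to obtain the integer degree bound $\le b$ from the strict fractional bound $< b+1$. Once that degree argument is in place, the remainder is a direct combination of Theorem~\ref{thm:koenig} and Theorem~\ref{thm:cbf} with the schedules concatenated in time.
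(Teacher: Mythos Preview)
Your proposal is correct and follows essentially the same approach as the paper: schedule all coflows with $C_j < b+1$ in the first $b$ slots via K\H{o}nig's theorem (using the LP feasibility to bound the maximum degree by $b$), then run $\mathrm{CBF}^\tau$ on the remaining coflows shifted by $b$, and add the two contributions. Your degree argument in Step~1 is a more explicit version of the paper's one-line feasibility observation, and your remark in Step~2 that restricting the LP to the large coflows preserves feasibility is a detail the paper leaves implicit.
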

\begin{proof}

The algorithm is a combination of Kőnig's Theorem with the $\mathrm{CBF}^\tau$ allocation procedure.\\
Let $E(b) := \cup_{j:C_j < b+1}E_j$ the set of edges contained in coflows with deadline strictly smaller than $b+1$. We claim that $\Delta(E(b)) \le b$. Assume otherwise, then there has to be a vertex with degree greater than or equal to $b+1$. But this would violate the feasibility of \ref{lp:main} for the deadlines, as there is no way to allocate $b+1$ or more edges adjacent to the same vertex within strictly less than $b+1$ time.\\
For the algorithm, take all coflows with deadline strictly smaller than $b+1$ and schedule them within the first $b$ time-slots. Schedule all remaining coflows using $\mathrm{CBF}^\tau$, shifting their allocation by $b$ time-slots. The claimed allocation guarantee is immediate from the definitions of Kőnig's Theorem and $\mathrm{CBF}^\tau$.
\end{proof}

Combining this algorithm with the edge allocation algorithms previously used, the following stronger bound can be derived. The difference between the two bounds is slightly more than $\frac{1}{100}$.

\begin{lemma}
There is a polynomial time algorithm achieving a $\frac{497}{146}(< 3.4042)$-approximation for Coflow Scheduling without release dates.
\end{lemma}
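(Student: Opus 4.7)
The plan is to apply the framework from Lemma \ref{lemma:appbound} to the three-algorithm ensemble $\{\mathrm{Greedy},\mathrm{CBF}^\tau,\mathrm{CKBF}^{(\tau,b)}\}$ with a suitable choice of the parameters $\tau$ and $b$. The new algorithm $\mathrm{CKBF}^{(\tau,b)}$ contributes a sharp constant completion-time guarantee, just the value $b$, for every coflow whose deadline is strictly below $b+1$; this is tighter than either $\mathrm{Greedy}$ or $\mathrm{CBF}^\tau$ provides on that regime, and is the source of the improvement over $\tfrac{140}{41}$.

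First I would record the delay-bound functions in the form required by Lemma \ref{lemma:appbound}. From Lemma \ref{lemma:greedy} one takes $f_G(x)=2x-1$, and from Theorem \ref{thm:cbf} one takes $f_{\mathrm{CBF}}(x)=\frac{\tau+2}{\tau}x+\frac{\tau}{2}+2.5-\frac{2}{\tau}$. Lemma \ref{label:app:koenigcbf} gives a weighted-sum bound that decomposes cleanly per coflow and is therefore equivalent to the piecewise function
\[
f_{\mathrm{CKBF}}(x) \;=\; \begin{cases} b, & 1 \le x < b+1,\\[2pt] \frac{\tau+2}{\tau}x+\frac{\tau}{2}+2.5+b-\frac{2}{\tau}, & x \ge b+1. \end{cases}
\]
The function is constant on $[1,b+1)$ and equals $f_{\mathrm{CBF}}(x)+b$ on $[b+1,\infty)$, with an upward jump at the breakpoint.

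Next I would formulate a small linear program: find convex weights $\lambda_G,\lambda_{\mathrm{CBF}},\lambda_{\mathrm{CKBF}}\ge 0$ summing to $1$, and the smallest $\alpha$, such that
\[
\lambda_G f_G(x)+\lambda_{\mathrm{CBF}} f_{\mathrm{CBF}}(x)+\lambda_{\mathrm{CKBF}} f_{\mathrm{CKBF}}(x) \;\le\; \alpha(x+1) \qquad \forall\, x \ge 1.
\]
Because the left-hand side is piecewise linear with a single breakpoint at $x=b+1$, this reduces to a finite set of linear constraints: the value at $x=1$ (left end of the constant piece), the value at $x=b+1$ on the right piece (which dominates the limit from the left thanks to the upward jump), and the slope of the right piece as $x\to\infty$. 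For fixed $\tau$ and $b$ this is a small LP in the variables $\lambda_G,\lambda_{\mathrm{CBF}},\lambda_{\mathrm{CKBF}},\alpha$.

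The main obstacle is the joint selection of $\tau$ and $b$ so that the optimum of this LP is exactly $\alpha=\tfrac{497}{292}$, after which Lemma \ref{lemma:appbound} delivers the $2\alpha=\tfrac{497}{146}$ ratio. I expect the binding constraints to be the asymptotic slope condition (balancing $\mathrm{Greedy}$'s slope of $2$ against the slope $(\tau+2)/\tau$ of the other two allocators) and the value bound at $x=b+1$ (where $\mathrm{CKBF}^{(\tau,b)}$ still contributes only $b$ but jumps upward immediately after), while the constraint at $x=1$ should be slack, reflecting that the gain comes from the flat piece of $f_{\mathrm{CKBF}}$. Enumerating small candidate pairs $(\tau,b)$, with $\tau=6$ (as in the $3.415$ proof) and $b\in\{1,2\}$ being the natural first guesses, I would solve each resulting LP, identify the minimizing pair, and then verify the claimed ratio by direct substitution, concluding the proof via Lemma \ref{lemma:appbound}.
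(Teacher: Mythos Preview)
Your plan has a genuine gap: with the \emph{averaged} linear bound $f_{\mathrm{CBF}}(x)=\frac{\tau+2}{\tau}x+\frac{\tau}{2}+2.5-\frac{2}{\tau}$ from Theorem~\ref{thm:cbf} as one of the three ingredients, no improvement over $\alpha=\tfrac{70}{41}$ is possible. Concretely, take your suggested $\tau=6$, $b=1$. At the point $\lambda_G=\tfrac{23}{41}$, $\lambda_{\mathrm{CBF}}=\tfrac{18}{41}$, $\lambda_{\mathrm{CKBF}}=0$, $\alpha=\tfrac{70}{41}$, \emph{all} of your constraints are tight: the slope, the value at $x=1$, the value at $x=2^+$, and $\lambda_{\mathrm{CKBF}}\ge 0$. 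A direct check of feasible directions shows that decreasing $\alpha$ forces (via the slope) $d\lambda_G<0$, which through the $x=2^+$ constraint forces $d\lambda_{\mathrm{CKBF}}<0$, contradicting $\lambda_{\mathrm{CKBF}}\ge 0$. The LP optimum is exactly $\tfrac{70}{41}$, and the third algorithm contributes nothing. The underlying reason is that in the proof of Theorem~\ref{thm:coflowapprox} the combination $\lambda_1 f_G+\lambda_2 f_{\mathrm{CBF}}$ equals $\tfrac{70}{41}(x+1)$ \emph{identically}; there is no slack anywhere on $[1,\infty)$ for the flat piece of $f_{\mathrm{CKBF}}$ to exploit.

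What the paper does differently is to abandon the averaged bound and instead use the \emph{raw} per-shift guarantees $f_{\mathrm{CBF}^\tau_\lambda}(x)$, which are step functions involving $\lceil (x-\lambda)/\tau\rceil$. It takes $\tau=5$ and a hand-picked, non-uniform subset of shifts $\lambda\in\{0,3,4,6,7\}$ (each with weight $\tfrac{117}{1460}$), together with $\mathrm{Greedy}$ (weight $\tfrac{749}{1460}$) and $\mathrm{CKBF}^{(6,1)}$ (weight $\tfrac{126}{1460}$). The staircase structure of the individual $f_{\mathrm{CBF}^5_\lambda}$ is what creates room below the line $\tfrac{70}{41}(x+1)$; passing to the smooth average throws this room away. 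To repair your argument you would need to work with these ceiling-type bounds (so the left-hand side becomes piecewise constant with many breakpoints, not just one at $x=b+1$) and search over subsets of shifts, not just over a single $\tau$.
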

\begin{proof}
We use a combination of $\mathrm{Greedy}$, $\mathrm{CBF}^\tau_\lambda$, and $\mathrm{CKBF}^{(\tau,b)}$. Let their guarantees be given by $f_G, f_{\mathrm{CBF}^\tau_\lambda}$, and $f_{\mathrm{CKBF}^{(\tau,b)}}$. We know that:
\begin{align*}
    f_G(x) \quad&=\quad 2x-1\\
    f_{\mathrm{CBF}^\tau_\lambda}(x) \quad&=\quad \begin{cases}\lambda + 2 & x \le \lambda\\(\tau+2) \cdot \lceil \frac{x}{\tau} \rceil &  \lambda = 0\\ (\tau+2) \cdot \lceil \frac{x - \lambda}{\tau} \rceil + (\lambda + 2) & \lambda \neq 0 \wedge x > \lambda  \end{cases}\\
    f_{\mathrm{CKBF}^{(\tau,b)}}(x) \quad&=\quad \begin{cases}b & x < b+1\\b + f_{\mathrm{CBF}^\tau}(x) & x \ge b+1\end{cases}
\end{align*}
Define $\lambda_1 := \frac{749}{1460}$, $\lambda_2 := \frac{126}{1460}$, and $\lambda_3 := \frac{117}{1460}$. Then we have:
\begin{equation*}
    \lambda_1 \cdot \mathrm{Greedy}(x) \Hquad+\Hquad \lambda_2 \cdot \mathrm{CKBF}^{(6,1)}(x) \Hquad+\Hquad \lambda_3 \cdot \sum_{\lambda \in \{0,3,4,6,7\}}\mathrm{CBF}^5_\lambda(x) \quad \le \quad \tfrac{2485}{1460} (x+1)
\end{equation*}
Thus from Lemma \ref{lemma:appbound} we obtain a $2 \cdot \frac{2485}{1460} = \frac{497}{146}$-approximation for Coflow Scheduling without release dates.
\end{proof}

Note that the same approach does not give an improvement for neither Coflow Scheduling with release dates nor the integrality gap from Section \ref{sec:app:lpintgap}. We surmise that similar improvements for these cases and also further tiny improvements for the case without release dates might be possible using more involved combinations of (possibly new) edge allocation functions.
\end{document}